\def\del{\partial}
\def\Z{\mathbb Z}
\def\R{\mathbb R}
\def\C{\mathbb C}
\def\Q{\mathbb Q}
\def\H{\mathscr{H}}
\def\eps{\epsilon}
\def\Tdeg{T_{\rm deg}}
\def\Sdeg{S_{\rm deg}}
\def\L{\mathscr L}
\def\bk{\mathbf{k}}
\def\V{\mathscr{V}}
\def\Bl{\mathscr Bl}
\def\bS{\mathbf{S}}
\def\bsigma{\boldsymbol\sigma}
\def\bx{\mathbf{x}}
\newenvironment{customthm}[1]
  {\innercustomthm}
  {\endinnercustomthm}
\newtheorem{thm}{Theorem}[section]
\newtheorem{lem}[thm]{Lemma}
\newtheorem{prop}[thm]{Proposition}
\newtheorem{cor}[thm]{Corollary}
\theoremstyle{definition}
\newtheorem{df}[thm]{Definition}
\newtheorem{qu}[thm]{Question}
\newtheorem{rmk}[thm]{Remark}
\accentedsymbol{\dbarG}{\Bar{\Bar{\Gamma}}}
\begin{document}

\title[Local and Global]
{Local models and global constraints for degeneracies and band crossings}

\author
[Ralph M.\ Kaufmann]{Ralph M.\ Kaufmann}
\email{rkaufman@math.purdue.edu}

\address{Department of Mathematics, Purdue University,
 West Lafayette, IN 47907 and MPIM Bonn, Vivatsgasse 11, 53111 Bonn, Germany}

\author
[Sergei Khlebnikov]{Sergei Khlebnikov}
\email{skhleb@physics.purdue.edu}

\address{Department of  Physics and Astronomy, Purdue University,
 West Lafayette, IN 47907}

\author
[Birgit Kaufmann]{Birgit Wehefritz--Kaufmann}
\email{ebkaufma@math.purdue.edu}

\address{Department of Mathematics and Department of Physics and Astronomy, Purdue University, West Lafayette, IN 47907}

\begin{abstract}
We study topological properties of families of Hamiltonians which may contain degenerate energy levels aka.\ band crossings.
The primary tool are Chern classes, Berry phases and slicing  by surfaces. To analyse the degenerate locus, we study local models. These give information
about the Chern classes and Berry phases. We then give global constraints for the topological invariants. This is an hitherto relatively unexplored subject.
The global constraints are more strict when incorporating symmetries such as time reversal symmetries. The results can also be used in the study of deformations.
We furthermore use these constraints to analyse
examples which include the Gyroid geometry, which exhibits Weyl points and triple crossings and the honeycomb geometry with its two Dirac points. 
\end{abstract}

\maketitle


\section*{Introduction}

%
%

Starting from considerations of families of Hamiltonians,
we give geometric and algebraic methods to study possibly higher band intersections;
these methods come from differential topology. Double crossings leading to
Dirac and Weyl points have been at the forefront of the investigations in the past few years. Our methods
extend beyond this, to triple and higher intersections. As an example, we analyse a triple intersection, stemming from a real world material  geometry (the Gyroid), and its deformation explicitly. Fabrication of the material---a nanowire network with the topology 
of a double gyroid---is described in \cite{Hillhouse} and numerical solutions to a wave
equation in such a network in \cite{Khlebnikov&Hillhouse}.

A family of Hamiltonians is a smooth map $H:T \to Herm(k)$ where $T$ is a smooth manifold and $Herm(k)$ is the space of Hermitian $k\times k$ matrices.
The smoothness is chosen for convenience, many arguments work on the $C^2$ level and some even on the topological level, i.e.\ for continuous families.
Such families arise naturally via Bloch theory  in condensed matter systems in $\mathbb{R}^d$ with translational symmetry $L\subset \R^d$, given a lattice $L\simeq \Z^d$.
Using Fourier transform, one obtains a family of Hamiltonians  $H(\bk)$ parameterised by quasi--momenta $\bk$ which are elements of the d--dimensional torus $T=T^d$.
We always keep this application in mind, but the methods are general.

There are two natural geometries associated to a family of Hamiltonians, the Eigenvalue and Eigenbundle geometry \cite{momentum}. The Eigenvalue geometry is  the cover of the parameter space by the energy levels. That is the cover $X\to T$ where $X\subset T\times \C^k$ is the subspace
whose points are $(\bk,spec(H(\bk)))$, where $spec(H(\bk))=\{\lambda_1,\dots,\lambda_k\}$ is the set of Eigenvalues of $H(\bk)$. Notice, that since $H(\bk)$ is Hermitian, the Eigenvalues are real and $X\subset T\times \R^k$.
This geometry was analysed in the general case in \cite{kkwk3}.
To give the Eigenbundle geometry, consider the (generalised) Bloch bundle  over $T:\Bl:=T\times {\mathbb C}^k\to T$. A physical state is a smooth section $s$ of the Bloch bundle and the Hilbert space of states
is given by all smooth sections $\H=\Gamma_{C^{\infty}}(T,\Bl)$, with the inner product induced from the standard Hermitian form on the fibers.
$H(\bk)$ acts on the fiber of $\Bl$ over $\bk$ simply as a matrix $H(\bk):\C^k\to\C^k$. This allows for a decomposition into Eigenbundles, which carries relevant information, since although
$\Bl$ is trivial,  its decomposition into Eigenbundles need not be.

Assume that $H(\bk)$ has a non--degenerate spectrum for each $\bk\in T$,
then the Bloch bundle decomposes into Eigenlinebundles $\Bl=\bigoplus_{i=1}^k \L_i$ and each of the line bundles can be non--trivial.
The non--triviality is measured by the first Chern classes $c_1(\L_i)$ and by the Berry phase, as we review below.
In general, $H(\bk)$ may be degenerate, and this more general situation is what we will analyse. This analysis was started in \cite{momentum} and we now add
local models and global aspects, such as symmetries, e.g.\ time reversal symmetry (TRS) and global topology to the mix.
This yields new global constraints and allows us in examples to completely characterise the Bloch bundle from local information.


For this analysis, we use Chern classes and thus $K$--theory.
The Chern classes can be computed using the Berry connection in the momentum space \cite{Berry,simon}.
This brings monopole charges and issues of topological stability into the picture and allows us to analyze deformations.

The paper is organised as follows:
After introducing the setup and reviewing the background, we present the main questions
  about local models and global constraints. In the second section, we  define and analyse local models. The models  which we will call ``of spin type" are especially important. The basic building blocks
go back to Berry's original examples \cite{Berry}, and Simon's \cite{simon} interpretation in a convenient formulation.
In the third section, we review the slicing technique for analysing 3d families and add generalisations.
We then introduce a new aspect in \S4, that is global constraints. We show what restrictions they entail, especially in the presence of symmetries. This partially answers a question of Berry, namely the singularities alone usually do not
determine the Eigenbundle geometry, but they do yield restrictions. In the presence of symmetries these may actually be enough to fully determine them.
This type of analysis can also be applied to the case of deformations. This leads to
the secondary question of stability. The global constraints allow for  complex topologies such as flux lines etc, but these are again restricted by any remaining symmetries.
There are minimal possible local singularities under deformations.

 These are realised in the specific examples that we analyse in \S5, such as  the Gyroid \cite{kkwk} and the honeycomb. Here the interesting new results are that
 \begin{enumerate}
 \item We give the local models and local Chern charges for the triple degeneracies and the double degeneracies (Weyl) points previously found in \cite{kkwk2}.
 \item The local data completely determine the global structure.
 \item The double Weyl points drift apart under deformations.
 \item   The triple points are of spin type and  have local Chern changes $-2,0,2$.
  \item Under deformations which preserve the time reversal symmetry (TRS), the triple points break up into four double or Weyl points each. This is the minimal possible dissolution of the triple points preserving TRS.
\end{enumerate}

\section{Setup and  Background}

\subsection{Eigenbundle geometry}
\label{evalpar}
We will follow \cite{momentum}.
As in the introduction consider a family of Hamiltonians $H:T\to Herm(k)$ and the trivial rank $k$ vector bundle
$\pi_{\Bl}:\Bl=T\times {\C}^k\to T$.  Let $T_{\rm deg}$ be the degenerate locus, i.e.\  $k\in T_{\rm deg}$ if and only if $H(k)$ has degenerate Eigenvalues. We will call these points critical or degenerate.
We will further assume that the components of $T_{\rm deg}$ are of at least codim $1$.
Let $T_0=T\setminus T_{\rm deg}$ the open complement, that is the locus where $H$ is non--degenerate.

The restriction $\Bl_0$ of $\Bl$ to $T_0$,  $\pi_{\Bl_0}:\Bl_0\to T_0$ then splits as a direct sum of line bundles
\begin{equation}
\Bl_0:=\bigoplus_{i=1}^k \L_i
 \end{equation}
 where $\L_i$ is the bundle of Eigenvectors of the i--the Eigenvalue.
  These are well defined by ordering the real Eigenvalues $\lambda_1<\dots<\lambda_k$.

This line bundle decomposition can usually not be extended to the degenerate locus, where level crossing, that is crossing of Eigenvalues happens.
For the whole space $T$, we can only  decompose
\begin{equation}
\Bl=\bigoplus_j \V_j
 \end{equation}
 where the $\V_j$ are rank $r_j$ vector bundles corresponding to the blocks of Eigenvalues that cross each other.
That is globally $\lambda_1\leq \dots \leq \lambda_{r_1}<\lambda_{r_1+1}\leq \dots \leq \lambda_{r_1+r_2}< \dots$, where $\sum_j r_j=k$.
Alternatively thinking about $H$ as an operator $H:\H\to \H$ this means that the Eigenbundles correspond to projectors commuting with $H$.

\subsubsection{Charges on the non-degenerate locus}
The main topological invariants of the Eigenbundle geometry are
the $K$--theory classes of the line bundles $\L_i$, see e.g.\ \cite{Atiyah},
in the $K$--theory of the non--degenerate locus: $[\L_i]\in K(T_0)$, which we call
the K--theoretic charges.

To these $K$-theoretical charges we obtain the more well--known  associated Chern classes
$\beta_i:=c_1(\L_i)\in H^2(T_{0})$ which we will call the cohomological charges, see e.g.\ \cite{Milnor}.
By general theory the total Chern class is given by $c(\Bl_0)=\prod_i (1+\beta_i)\in H^{\rm ev}(T_0)$, the even part of the cohomology.


One obtains numerical charges by pairing the cohomology valued Chern classes with homology classes.
By means of Chern--Weil theory this is usually implemented by integration of a differential form over a (sub)--manifold of the correct dimension. However, as these charges actually stem from the topological homology/cohomology pairing, which is defined over $\mathbb Z$, they are integers.

\subsubsection{Assumption}
In order to have this theory available and usable one needs certain ``niceness'' assumptions \cite{momentum}. Here,
we will consider the case where the components of $\Tdeg$ are contractible and are such that each component $T_c$ of $\Tdeg$ is  contained in the interior of a regular neighbourhood, that is a cell $N_{T_c}$, that is a sub--manifold homeomorphic to a closed ball of dimension $dim(T)$, and these submanifolds do not intersect. In this case $T_0$ is homotopy equivalent to $\bar T_0=T\setminus \amalg_{T_c} N^{int}_{T_c}$, where the sum is over the components of $\Tdeg$ and $N^{int}_{T_c}$ is the open interior of $N_{T_c}$.
If $T$ is a compact manifold then so will  be $\bar T_0$. If $T$ is a compact manifold, then $\bar T_0$ will be a manifold with boundary.

In the case that $\Tdeg$ is made up out of a discrete set of points this assumption is satisfied and  these submanifolds can  be taken to be balls centered at the degenerate points. For a more general setup see \cite{momentum}.
This assumption is also satisfied if the components of $\Tdeg$ have finitely many contractible components.

Using this assumption, we can equivalently consider the charges for $T_0$ to lie  in $H^*(\bar T_0)\simeq H^*(T_0)$ and in $K(\bar T_0)\simeq K(T_0)$.
With this assumption the results we state are not in their most general form, but it relieves us from too much technical detail.
In concrete situations, it is easily checked if the results can be extended.

\begin{rmk}
We can also consider the total Chern classes $c(\V_i)\in H^{ev}(T)$.
If this has usable information depends on the family. If for instance all bands cross,
we only get $\Bl=\V_1$ which is trivial and hence $c(\V_1)=1$.
\end{rmk}

\begin{rmk}
We have  assumed that the Hamiltonians are generically non--degenerate. Technically, it is sufficient to assume that the ranks of the Eigenbundles are generically constant. In this case, the singular locus is where the rank jumps up and instead of line bundles over the non-degenerate locus one will have vector bundles $\V_i$ and total Chern classes $c(\V_i)$. This is important for the case in which every level is doubly degenerate, such as for instance caused by a spin symmetry where the $\L_i$ are replaced with vector bundles $\V_i$ of rank $2$. We will deal with this case in the future.
\end{rmk}

\begin{rmk}
Notice that the charges are trivial if $T_0$ has vanishing second cohomology (e.g. if $T_0$ is 2--connected).
In that case the Chern classes $\beta_i$ vanish and the line bundles $[\L_i]$ are trivializable.
This is the case in some examples, notably the honeycomb. Another consequence of this triviality is that the associated points of degeneracy are not topologically stable.
The two--torus or the two--sphere do however have non--vanishing $H^2$ and thus are prime candidates
to carry non--trivial first Chern classes and hence non--trivial bundles
with non--trivial Berry phases.
\end{rmk}

\subsubsection{Scalar Topological Charges}

To code this information into measurable numbers, one needs to pair
the cohomological charges with homology classes. In the differentiable setting
this corresponds to the integral of the curvature form for any connection over
a cycle of the correct degree. The set of {\em all} such numbers on a set of generators of homology of $T_0$
then determines the cohomological charges as functions on homology.
If we use at least $\Q$ coefficients (usually in physics one takes $\R$ or $\C$ to represent everything by forms and integrals), this in turn completely fixes the line bundles as given by the Chern isomorphism theorem and the classification theorem for line bundles, see e.g.\ \cite{husemoeller}.

For these considerations, it is easier to assume that we are dealing with oriented manifolds. If we furthermore have a differentiable structure, we know that we can evaluate Chern classes
by using Chern--Weil theory. E.g. if $A$ is a connection form for the line bundle, we can represent
the first Chern--class by the curvature form $\Omega= dA+\frac{1}{2}A\wedge A$.

Using other even dimensional homology cycles, we can also extract some information of the corresponding combinations of first Chern classes according
to the usual formulas for the full Chern class in terms of (virtual) line bundles; see e.g.\ \cite{Hirzebruch}.

%
%
%
%
%

\subsubsection{Berry phase/connection}
\label{Berrypar}

Following Berry \cite{Berry} we can use the connection $A_{Berry}$ provided by adiabatic transport for the line bundles $\L_i$.
It was Berry's insight that this connection is indeed not always trivial
 and produces the so--called Berry phase as a possible monodromy.
 In particular, if $C$ is a closed circuit and $|\psi\rangle$ is a state then
 adiabatically moving $|\psi\rangle$ around $C$ may introduce an extra geometric phase $e^{i\gamma(C)}$. It is important to note that the quantity $\gamma(C)$ is only defined up to multiples of $2\pi$.

The phase can be computed using the so--called Berry connection and Stokes theorem.
For this one considers a surface whose boundary is $C$ and then computes the integral of the connection over the surface to obtain $\gamma(C)$, see below for an example. What is important to point out here, is that the computation does depend on the chosen surface, but only up to adding multiples of $2\pi$.

 Simon \cite{simon} noticed that integrating this connection over a closed surface $S$ computes exactly the first Chern class $c_1(\L_i)$ of the line bundle $\L_i$ paired with $S$.

The usual Chern--Weil form for any choice of connection is given by an expression in the curvature for a choice of a connection \cite{ChernSimons}. One such choice for a line bundle is the Berry connection.  Stokes' theorem then links the computation of the Berry phase to the integral of the vector field $V$ given by th curvature form  over a bounding surface $\iint_S  V dS= \oint_C A_{Berry} dr$. This was related \cite{BottChern} to the first Chern class by changing the representation of $V$  using the Bott and Chern connection and realising that in this form $V$ satisfies $\frac{1}{2\pi}\int_SV_m dS=c_1$.
These computations are linked to the Chern-Simons forms $Q_{2l-1}$
through the fundamental relation that $dQ_{2l-1}=ch_{2l}$ where $ch_{2l}$ is the degree $2l$ part of the Chern character \cite{ChernSimons}. This will be further explored elsewhere.
\subsection{Geometry of $Herm(k)$}

\subsubsection{Full Family vs.\ concrete families}
Traditionally, the ``ge\-ne\-ric scenario'' has been of interest. This is a (generic subset of) the tautological family $T=Herm(k)$ and $H=id:Herm(k)\to Herm(k)$ is simply the identity map. The study of the full family  goes back to \cite{vNW}.
As our analysis deals with general variations, that can be non--generic in
the above sense ---and sometimes even have to be due to the presence of extra symmetries---  the results about the generic case merely provide expectations
which may or may not hold in the given situation.

The most prominent results on the generic geometry of $Herm(k)$ were already obtained in \cite{vNW}. Here one can find the co--dimensions of the strata of degenerate Eigenvalues, basically by a dimension count.
A particularly well known fact is that generically the locus of degenerate Eigenvalues, that is Eigenvalues of multiplicity $>1$, is of codimension 3 \cite{vNW}. Thus $3$ is the expected codimension, but in a given variation this may or may not be the actual codimension, and we have examples of both types of behaviour. For the real situation one finds generically that the codimension is $2$. The analysis of the geometry of the tautological family was carried further in \cite{arnold}, where a filtration was introduced. Arnold \cite{arnold} studied this filtration and that study has been continued in \cite{agrachev}.
 This sphere bounds a $k^2-1$--dimensional ball to which the family naturally extends. This has a maximally degenerate point at zero. In general, the Hamiltonians on the sphere can also be degenerate.

\subsection{Effective sphere families}
 For these and other discussions it is  convenient to mod out the $k^2$-real-dimensional vector space $Herm(k)$ by translations and dilatations as
shifting (adding constant scalar matrices) or scaling (multiplying by non--zero constants) the spectrum or scaling it does not change the topology of the situation.
Modding out by the translations means that we can restrict to traceless matrices and modding out by dilatations means that after choosing a basis we can scale the corresponding vectors to be of norm $1$, unless we are dealing with the $0$ matrix; see below for the case of $2\times 2$ matrices.
The quotient space of the space of {\em non--scalar} Hermitian matrices under the simultaneous action, which is naturally identified with the co-invariants, is then a $k^2-2$--dimensional sphere.
 This sphere then has a filtration by pieces $F_p$ consisting of those points where the first $p$ Eigenvalues are equal.

\subsubsection{$Herm(2)$}
\label{2x2case}
In the special case of $2\times 2$ Hermitian matrices it is well known that the Pauli matrices
$$\sigma_0=\left(\begin{matrix}1&0\\0&1\end{matrix}\right),
\sigma_x=\left(\begin{matrix}0&1\\1&0\end{matrix}\right),
\sigma_y=\left(\begin{matrix}0&-i\\i&0\end{matrix}\right),
\sigma_z=\left(\begin{matrix}1&0\\0&-1\end{matrix}\right)
$$
 form a basis for the 4--dimensional space of Hermitian matrices. The traceless matrices are spanned by $\sigma_x,\sigma_y,\sigma_z$ and are of the form
 \begin{equation}
 \label{Weylfameq}
 {\mathbf x}\cdot{\boldsymbol\sigma}=x\sigma_x+y\sigma_y+z\sigma_z=
 \left(\begin{matrix}z&x-iy\\x+iy&-z\end{matrix}\right),
 \end{equation}
Restricting the $2^2-2=2$ dimensional sphere, restricts $\bf x$ to lie on $S^2\subset {\mathbb R}^3$, i.e.\ $x^2+y^2+z^2=1$. This family is entirely non--degenerate. Notice that this $S^2$ is centered around the zero--matrix and the extension of the family to the 3--ball $B^3$ has an isolated degenerate point at $\bf 0$.
The ball family is the local model for a doubly degenerate point, aka.\ Weyl point.

\subsubsection{$Herm(3)$} In the case of $k=3$ the traceless Hermitian matrices are spanned by the $8$ Gell--Mann matrices $\lambda_i,i=1,\dots, 8$. Modding out by dilatations,  one can restrict to an $S^7\subset \mathbb{R}^8$.
Here the center of the sphere is again at the origin and is a 3--fold degeneracy. The family has 2--fold degeneracies on the sphere.
Progress on the full analysis of the family, especially on the degenerate part $S^7_{\rm deg}$ has been made in \cite{GellMann,SimonQuaternion} where this is linked to an $S^4$ which naturally supports second Chern classes.

\subsection{Local Models}
Locally the behaviour near a particular point is given by the family restricted to a regular neighbourhood, that is locally the families are described by families on a ball.
Thus we define {\it a basic local model} to be a germ of a diffeomorphism class of maps $B^n\to Herm(k)$,
where we identify two classes if one is contained in the other by the restriction to a smaller ball with the same center. Two germs are equivalent if they result from each other by conjugation by a unitary linear transformation.

A local model is the direct sum of basic local models. A basic model is called simple if the Bloch bundle does not split into a direct sum of subbundles.

\subsection{Local charges}
\label{chargepar}
For each component $T_c$ of $\Tdeg$, we can consider the submanifold $\del N_{T_c}$ which is homeomorphic to a sphere of dimension $dim(T)-1$ and consider the restriction of $Q_c$ to it.

We define the {\em local charge} of that component to be
$$Q_{T_c}:=\int_{\del N_{T_c}} i^* (Q_c)$$
where $i:\del N_{T_c}\to T$  is the inclusion.

This is of course only interesting if $T$ is odd dimensional and hence $\del N_{T_c}$ are even dimensional spheres. In the even dimensional case, that is odd dimension of $\del N_{T_c}$ we can still consider Chern--Simons classes and Berry phases.

The charges are invariant under equivalences and homotopies in the appropriate sense.

\subsection{Main Questions: Local models and Global Constraints}

This  immediately begs the following questions, already raised in Berry's original article \cite{Berry}.
\begin{qu}\mbox{ }
\begin{enumerate}
\item Is it possible to classify the local models?
\item Is it possible to classify the local charges?
\item How are these constrained by the geometry of the base/family?
\item How do points/the degenerate locus behave under deformation?
\end{enumerate}
\end{qu}
We will address these questions below.
The classification is possible in certain cases. I.e.\ for instance for $k=2$ and an
isolated regular singularity. In this case, it is just the family given in \S\ref{2x2case}.

The first will lead us to consider
local models and the latter to introduce global restrictions.
The surprising fact is that sometimes these are enough to determine the spectrum.
For positive results see Theorem \ref{constrainthm}.

As to the last question. Indeed the first expectation that the isolated critical points behave
like monopoles is not quite  correct,  as already Berry noted.
 First, the Chern charge does not depend on the total spin, see \S\ref{multcrosspar} for details, and secondly
under general deformations the degenerate locus can  split, deform and smear out, see below.
What is, however, true
is that the local charges have to be preserved, in the sense that if they split or create singularity
loci of higher dimension, the total local charges in the sense of \S\ref{chargepar} have to be preserved.
Here one has to take $N_{T_c}$ large enough to contain all the components created when deforming the degenerate locus $T_c$.

\subsection{Deformations and Topological stability}
\label{stablepar}
Having non--vanishing topological charges produces topological stability. If we perturb the Hamiltonian slightly by adding a small perturbation term $\lambda H_1$ and continuously vary $\lambda$ starting at $0$, then  $\Tdeg$  and thus $T_0$ does not move much ---for instance as  submanifolds  of $T\times R$ where we keep the base $T$ constant. This follows for instance from the description of the Eigenvalue geometry using the characteristic map.  The Eigenbundles over $T$ also vary continuously and hence so do their Chern classes. Since these are defined over $\Z$ they are actually locally constant, so that all the non--vanishing charges, scalar, K-theoretic or cohomological, must be preserved.
That is, the total local charges will be preserved on $\bar T_0$ as long as we cut away enough, that is make $N_{T_c}$ large enough.

However, there is no guarantee that the local charges are ``carried'' by single
points and that the number of these is preserved. We will give a concrete example, where one triple degenerate point decomposes into four double points.
Likewise points could possibly degenerate into lines. This is however
not generic. The opposite phenomenon, i.e.\ contraction of a dimension 2 or higher
locus to a point is certainly possible. All these deformation have to preserve the local charges. This is why Weyl points are of interest. If there is a non--trivial charge associated to them, they cannot decay.

\section{Local Models}

\subsection{Local models from the Eigenvalue geometry}

In \cite{kkwk3}, we proved that in general the fibers over points of
$\Tdeg$ have singularities pulled back from the singular locus, aka.\ the swallowtail  of the $A_{k-1}$ singularity, and
are classified by types $(A_{k_1},\dots, A_{k_l})$ with $\sum_j(k_j+1)=k$.
In particular,  locally the Eigenvalue geometry is pulled back from the unfolding of the $A_{k-1}$ singularity under the so--called characteristic map $\Xi:T\to \mathbb{C}^{k-1}$ of miniversal unfolding $M\to \mathbb{C}^{k-1}$ of the $A_{k-1}$ singularities, see {\it loc cit.}.
These are known due to Grothendieck \cite{grothendieck} to be stratified with the strata corresponding to the possibilities to delete vertices (and the incident edges), whence the classification. Deformation of the family deforms the map $\Xi$ and with it the crossings and $\Tdeg$ which is the inverse image of the swallowtail under $\Xi$.

What was not stressed in \cite{kkwk3} is that for Hamiltonian families $(A_{k_1},\dots, A_{k_l})$ is actually an ordered set, since then everything is defined over $\R$. It is ordered by
the values of the Eigenvalues as discussed in \S \ref{evalpar}. We will start with the lowest Eigenvalue first.

 For instance,  for the Gyroid, for which $k=4$ we found
two triple crossings with types $(A_2,A_0)$  and $(A_0,A_2)$  and  two double Weyl crossings of type $(A_1,A_1)$.

\subsection{A simple local model for the Eigenbundle geometry of  an isolated $n$--fold degeneracy on a $3$d base}
\label{multcrosspar}
In particular, there are 3-parameter models for all isolated normal singular crossings of $n$ Eigenvalues, that is isolated
$A_{n-1}$ type singularities.  For a double crossing this local model is essentially unique, see Corollary \ref{doublecor}, for higher crossings there might be other possible models.
These were already explored by Berry \cite{Berry} and can also be found in \cite{simon}.
In particular, they exhibit  an isolated point in $T_{\rm deg}$ with maximal degeneracy and  the degeneracy is lifted to first order in each direction, which is what is called a ``normal singular'' in \cite{simon}.

Generically such local models are expected to appear
 when $T$ is of dimension $3$ as the degenerate locus should be of codimension $3$ and hence consist of isolated points.

\noindent{\sc Conventions:}
Fix an integer $S$ and let $T=\R^3$.
 Consider the $S$--dimensional spin representation of $su_2$ that is
given by the collection of matrices $\bS=(S_x,S_y,S_z)$ which act on $\C^{S}$ and satisfy the usual commutator relations
$[S_x,S_y]=iS_z$, $[S_y,S_z]=iS_x$, $[S_z,S_x]=iS_y$.
The possibly half--integer $s$ is defined via  $S=2s+1$ and is called the spin of the representation.


Notice that $S_z$ is diagonalisable with Eigenvalues
$S_m:m=-s, -s+1, \dots, s-1, s$ where $m$ is integer or half--integer depending on whether $S$ is odd or even.
Consider the family of traceless Hamiltonians
\begin{equation}
H({\bf x}):={\bf x}\cdot {\bS}=xS_x+yS_y+zS_z
\end{equation}
on $\R^3$. This is rotationally symmetric and has only one critical point at $\bf 0$. It is totally degenerate, that is all $2s+1$ bands cross.
Thus on $T_0=\R^3\setminus \{{\bf 0}\}$ the family has no critical points and the line bundles $\L_m, m=-s,\dots, s$ corresponding to the Eigenvalues above
are well defined over all of $T_0$. If we restrict the family to the homotopic family $S^2\subset T_0$, we get K-theoric, cohomological,  and numerical Chern charges. Once an orientation is chosen,
these all carry the same information since the choice of orientation establishes an isomorphism $H^2(S^2)\simeq\Z$ and the reduced $K$-theory of $S^2$ is also identified with $\Z$.
One can calculate \cite{Berry,simon} that the Chern charges are
$$\int_{S^2}c_1(\L_m)=2m$$
Note that this is independent of the value of $S$.
Here the orientation is the usual orientation of $S^2\subset \mathbb{R}^3$.

Notice that a reversal of orientation will change the isomorphism sending $1$ to $-1$ and hence flip the sign.

Moreover, according to Berry \cite{Berry}, the Berry phase for the bundle $\L_m$ along a closed circuit $C$ is proportional to the solid angle subtended over a surface $S$ which has $C$ as a boundary.
More precisely,
\begin{equation}
\gamma_m(C)= m \iint_S d\Omega
\label{berryeq}
\end{equation}
where $d\Omega$ is the solid angle two--form $sin(\theta)d\phi d\theta$.
It is important that $\int_S d\Omega$ depends on the choice of $D$ and is only well defined up to a change of $4\pi$.
In case that $C$ is the equator counterclockwise and $D$ is the upper hemisphere, \eqref{berryeq} becomes
\begin{equation}
\label{berry2eq}
\gamma_m(C)= m 2\pi
\end{equation}
In the case of spin $\frac{1}{2}$ this will be $\pm \pi$. In the 3--band system for spin 1, this will take
values $-2\pi,0,2\pi$ depending on $m$.
Choosing the lower hemisphere would result in a difference of $m4\pi=2\pi c_1(m)$ which is always an integer mod $2\pi$.

It is important to note that ``spin'' here refers to the particular type of Hamiltonian and does not have to coincide with physical spin.
%
%
%
%

\subsection{Spin--Type Models and Their Charges}

\begin{df}
 We say that  an isolated point  $\bk_0 \in \Tdeg$ is of spin type  $(s_1,\dots, s_l)$,  if it is of singularity type $(A_{2s_1},\dots,A_{2s_l})$ and
 there is a linear isomorphism $L_{\phi_j}$   for each $A_{k_j}$ singularity in the Eigenvalues  to first order perturbation  theory
$P_j[H(\bk_0+\bx)-H(\bk_0)]P_j={\bf a_j}\bx id+ L_{\phi_j}(\bx) \cdot \bS +O(\bx^2)$ where ${\bf a}_j$ is a vector, $\bS=(S_x,S_y,S_z)$ is a spin $s_j$
representation of $su(2)$ and $P_j$ is the projector onto the degenerate Eigenspace of the $2s_j+1$ fold crossing.
\end{df}

This definition is a bit technical, but practical.
Examples for the Gyroid, see below for details, are points of spin type $(0,1),(1,0)$ and $(\frac{1}{2},\frac{1}{2})$ where $3$ of $4$ bands cross, or $2$ and $2$ bands cross.

If we subtracted the trace  to be in the case of traceless matrices, then we get a nice equivalent homotopy characterisation.

\begin{thm}
An isolated  point  ${\bf k}_0\in \Tdeg$ for a  3--dimensional family
is of   spin type $(s_1,\dots,s_l)$  if and only if the local model of $H-Tr(H)$ at $\bk_0$ is homotopic, through a homotopy of families with only one isolated critical point,
to a direct sum of Hamiltonians of the corresponding spin Hamiltonians of the form of \S\ref{multcrosspar}.
That is,
there is a regular closed neighbourhood $V$ of $\bk_0$ and  diffeomorphisms $\phi_j:V\to B^3$ such that on $V$:
$H(\bk)-Tr(H(\bk))$ is homotopic to $\phi_1(\bk) \cdot \bS_{s_1}\oplus \dots \oplus \phi_l(\bk) \cdot \bS_{s_l}$ via a homotopy of families only degenerate at $\bf 0$.
\end{thm}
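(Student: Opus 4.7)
Setup. By translation I take $\bk_0 = 0$, and I replace $H$ by its traceless part $\tilde H = H - k^{-1}\operatorname{Tr}(H)\,I$. Under either hypothesis the spectrum of $\tilde H(0)$ decomposes into $l$ clusters of multiplicities $2s_1+1,\dots,2s_l+1$ that are separated at first order in $\bx$ away from $0$. Hence on a sufficiently small closed ball $V$ around $0$, standard Kato-style perturbation theory supplies smooth spectral projectors $P_j(\bx)$ and a smooth unitary $U(\bx)$ with $U(0) = I$ realising
\[
U(\bx)^{-1}\tilde H(\bx)\, U(\bx) \;=\; \bigoplus_{j=1}^l H_j(\bx),
\]
where $H_j(\bx)$ acts on a fixed copy of $\C^{2s_j+1}$. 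Split each block further as $H_j(\bx) = c_j(\bx)\, I_j + \tilde H_j(\bx)$, with scalar shift $c_j = \operatorname{Tr}(H_j)/(2s_j+1)$ and block-traceless remainder $\tilde H_j$.

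Forward implication. Assuming spin type, $\tilde H_j(\bx) = L_{\phi_j}(\bx)\cdot \bS_{s_j} + R_j(\bx)$ with $L_{\phi_j}$ a linear isomorphism of $\R^3$ and $R_j(\bx) = O(|\bx|^2)$. I would produce the homotopy on $V$ in three linear stages, shrinking $V$ at each stage as needed. Stage I contracts $R_j$ to zero inside each block; since $L_{\phi_j}(\bx)\cdot\bS_{s_j}$ has within-block eigenvalue gap $\Theta(|\bx|)$ by the spin-$s_j$ calculation of \S\ref{multcrosspar}, an $O(|\bx|^2)$ perturbation cannot close that gap once $V$ is small enough. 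Stage II contracts the scalar shifts $c_j(\bx) I_j$ to zero, the constraint being that the cross-block spectra of the pure spin blocks remain disjoint for $\bx \neq 0$; for generic $L_{\phi_j}$ this is automatic after a further shrink, and in degenerate cases (e.g.\ the presence of a spin-$0$ block coinciding with the middle band of a higher-spin block) it is imposed as an implicit genericity condition on the target direct sum. Stage III contracts the outer conjugation $U(\bx)$ to the identity via a pointwise unitary homotopy, which never alters the spectrum. Setting $\phi_j := L_{\phi_j}$ then realises the target direct sum $\bigoplus_j \phi_j(\bx)\cdot \bS_{s_j}$.

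Converse and main obstacle. For the converse, assume $\tilde H$ is homotopic to the direct-sum model through families degenerate only at $0$. The direct-sum model manifestly has spin type at $0$ with $L_{\phi_j} = d\phi_j(0)$. The singularity type $(A_{2s_1},\dots,A_{2s_l})$ of the characteristic map $\chr$ is a discrete invariant of the swallowtail stratification recalled in \S2.1 and is preserved along any such homotopy, so $\tilde H(0)$ shares that type. First-order perturbation theory on block $j$ of $\tilde H$ then produces a Hermitian-valued linear pencil $\R^3 \to M_{2s_j+1}(\C)$ whose generic eigenvalues are simple (forced by $A_{2s_j}$); the classification of such pencils underlying the examples of \cite{Berry,simon} identifies every such linearisation, up to a linear change of coordinates on $\R^3$, with a spin-$s_j$ representation of $su(2)$, yielding the $L_{\phi_j}$ required by the spin-type definition. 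The main obstacle throughout is quantitative rather than conceptual: verifying at each stage of the forward homotopy that $\bk_0$ remains the only degenerate point forces a careful sequential shrinking of $V$, controlling simultaneously the within-block gaps (of order $|\bx|$), the off-block perturbations (of order $|\bx|^2$ after block-diagonalisation), and the cross-block separations; the Stage II step where scalar shifts are removed is the delicate one, as it is here that cross-block spectra can accidentally collide.
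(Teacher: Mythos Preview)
Your forward implication runs parallel to the paper's: both homotope away unwanted terms in three stages, shrinking the neighbourhood at each step to keep $\bk_0$ the sole degeneracy. The difference is that the paper diagonalises $H(\bk_0)$ with a \emph{constant} unitary $U$ and then linearly scales away (i) the higher-order terms, (ii) the constant traceless diagonal $U^\dagger[H(\bk_0)-\operatorname{tr}H(\bk_0)]U$, and (iii) the off-block part of the linear piece, whereas you first block-diagonalise with a \emph{smooth} Kato unitary $U(\bx)$ and then contract (I) within-block remainders, (II) scalar shifts, and (III) the conjugation $U(\bx)$ itself. Both routes face the same cross-block collision issue at the stage where the block centres are brought to zero, and you are right to flag this as the delicate step.

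Your converse, however, contains a genuine gap. The paper's converse is a one-line computation on the \emph{target} family: expanding $\bigoplus_j \phi_j(\bk)\cdot\bS_{s_j}$ to first order at $\bk_0$ reads off $L_{\phi_j}=d\phi_j(\bk_0)$ directly. You instead try to argue that the linearisation of $H$ itself must be spin, by appealing to a ``classification of Hermitian-valued linear pencils with generically simple spectrum.'' That classification is false for $s_j\geq 1$. For $s_j=1$, the $3$-plane $\operatorname{span}(S_x+t\lambda_8,\,S_y,\,S_z)$ inside the $8$-dimensional space of traceless Hermitian $3\times 3$ matrices (with $\lambda_8$ the diagonal Gell-Mann matrix) gives a linear family with simple eigenvalues at every nonzero $\bx$ for small $t$; yet for $t\neq 0$ this span is not closed under commutators and hence is not unitarily conjugate to the irreducible spin-$1$ subalgebra. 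A dimension count confirms that this is the generic situation: the $SU(3)$-conjugation orbit of the spin-$1$ plane has dimension at most $8-3=5$ inside the $15$-dimensional Grassmannian of $3$-planes. The result you attribute to \cite{Berry,simon} holds only for $s_j=\tfrac12$, where the traceless Hermitian $2\times 2$ matrices already form a $3$-dimensional space and the claim is automatic---this is exactly Corollary~\ref{doublecor}.
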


\begin{proof}
If there are such a diffeomorphisms and a homotopy then expanding $P_jH(\phi_j(\bk))P_j$ to first order, we see that we have a family homotopic to spin type where $L_{\phi_j}$ is equal to the Jacobian of $\phi_j$.

If the point $\bk_0$ is of spin type, consider the first order perturbation theory as above. Now using a unitary transform $U$ to diagonalise $H(\bk_0)$ we have
that  $U^\dagger H(\bk_0+\bx)U=tr(H(\bk_0+\bx))+\tilde H$ where $\tilde H= U^\dagger [H(\bk_0)-trH(\bk_0)]U+ \tilde H_1(\bx)+O(x^2)$  with $\tilde H_1(\bx)$ a traceless matrix and
$\bigoplus_{j=1}^k P_j\tilde H_1(\bx)P_j=\bigoplus_{j=1}^k L_{\phi_j}(\bx)\bS_j$ where $P_j$ is the projector onto the degenerate Eigenspace corresponding to the degenerate Eigenvalues $\lambda_j$. We can now homotope unwanted terms away  in three steps. First we homotope any higher order terms  by scaling them to zero. Since to leading order the spin Hamiltonians resolve the degeneracies choosing a small enough neighbourhood, this can be done through families only degenerate at $\bf 0$.
Second, we can homotope away the traceless diagonal term  $U^\dagger [H(\bk_0)-trH(\bk_0)]U$ by restricting the family to a neighbourhood of size $\eps<
\frac{1}{2}\frac{max(|\lambda_i-\lambda_k|)}{max |s_i|}$. The homotopy is simply given by $(1-t)U^\dagger [H(\bk_0)-trH(\bk_0)]U +\tilde H_1$.
 Since we are in a neighbourhood of radius less than $\eps$, the Eigenvalues will not cross during the homotopy and $\bf 0$ will remain the only non--degenerate point.
In the last step we homotope away all unwanted coefficients
 of the matrix $\tilde H_1$ outside the blocks corresponding to the projections.
 This can be done by the  homotopy $(1-t)[\tilde H_1-\bigoplus_{j=1}^k P_j\tilde H_1(\bx)P_j]+\bigoplus_{j=1}^k P_j\tilde H_1(\bx)P_j$, since the degeneracies are resolved to first order by
 restricting to a smaller neighbourhood if necessary.
%
%
\end{proof}

\begin{cor}
If $\bk_0$ is an isolated point of $\Tdeg$ of spin type   $(s_1,\dots, s_l)$,
 then the local charge of $\L_m$ where $m=-s_j,\dots, s_j$ corresponding to the given summand $j$ is $sign(\phi_j)2m$ where $sign(\phi_j)$ is the sign of the determinant of $L_{\phi_j}$.
\end{cor}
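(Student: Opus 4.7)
The plan is to invoke the theorem just proved to replace the given spin-type family by a standard direct-sum spin Hamiltonian, and then read off the charge from the computation in \S\ref{multcrosspar}.

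First I would use the theorem to produce a regular neighbourhood $V$ of $\bk_0$, diffeomorphisms $\phi_j:V\to B^3$, and a homotopy of families with a single degenerate point at $\bk_0$ connecting $H-Tr(H)$ on $V$ to $\bigoplus_j \phi_j(\bk)\cdot\bS_{s_j}$. Since the local charge is, by \S\ref{chargepar}, the integral of $c_1(\L_m)$ over $\partial N_{T_c}\cong S^2$ and this integer is invariant under homotopies of families whose degenerate locus stays inside $N_{T_c}$ (as noted at the end of \S\ref{chargepar} and \S\ref{stablepar}, the Chern charges are $\Z$-valued and locally constant in families), we may compute the charge after the homotopy. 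Furthermore, subtracting $Tr(H)\cdot id$ multiplies every eigenbundle by a trivial line bundle, so it does not change Chern classes; hence we may work with the traceless model.

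Next, the Bloch bundle on $V$ splits as a direct sum of the pieces coming from each projector $P_j$, and on the $j$-th summand only $\phi_j(\bk)\cdot\bS_{s_j}$ acts nontrivially. Thus the eigenline $\L_m$ associated to the $j$-th block and eigenvalue label $m\in\{-s_j,\dots,s_j\}$ is the pullback under $\phi_j$ of the corresponding standard eigenline $\L_m^{std}$ of $\bx\cdot\bS_{s_j}$ on $B^3\setminus\{\mathbf 0\}$. By naturality of the first Chern class,
\begin{equation}
\int_{\partial V} c_1(\L_m)=\int_{\partial V}\phi_j^*\,c_1(\L_m^{std})=\deg(\phi_j|_{\partial V})\int_{S^2}c_1(\L_m^{std}).
\end{equation}
The standard computation recalled in \S\ref{multcrosspar} gives $\int_{S^2}c_1(\L_m^{std})=2m$ with the outward orientation of $S^2\subset\R^3$.

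Finally I would identify the degree factor. Since $\phi_j$ is a diffeomorphism $V\to B^3$, its restriction to boundaries (with the outward-normal orientations used to define local charges) has degree $+1$ or $-1$ according to whether $\phi_j$ preserves or reverses orientation, i.e.\ according to the sign of $\det L_{\phi_j}$ at $\bk_0$ (the differential of $\phi_j$ is precisely $L_{\phi_j}$ by the proof of the theorem). Therefore $\deg(\phi_j|_{\partial V})=sign(\phi_j)$, and the local charge equals $sign(\phi_j)\,2m$, as claimed. The only delicate point---really the only thing beyond the theorem and Berry/Simon's calculation---is the careful bookkeeping of this orientation sign: one must check that the conventions for the outward-normal orientation on $\partial N_{T_c}$ used in \S\ref{chargepar} and for the standard $S^2\subset\R^3$ used in \S\ref{multcrosspar} are the ones that make the degree of $\phi_j|_{\partial V}$ agree with $sign\det L_{\phi_j}$; this is the step I would write out most carefully.
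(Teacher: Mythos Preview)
Your proposal is correct and follows essentially the same route as the paper: invoke the theorem to homotope to the standard spin model, use homotopy invariance of the Chern number on the boundary $S^2$, identify $\L_m$ as the pullback of the standard eigenline under $\phi_j$ (the paper uses the linearization $L_{\phi_j}$ directly, which is equivalent since it is the Jacobian of $\phi_j$), and then read off $\deg=\operatorname{sign}\det L_{\phi_j}$ times the Berry--Simon value $2m$. Your write-up is more explicit about the trace subtraction and the orientation bookkeeping, but there is no substantive difference in strategy.
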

The sign $sign(det(L_{\phi_j}))$ is independent of $m$ and will be called the {\em chirality}.
\begin{proof}
Since the homotopy preserves the non--degeneracy on the  $S^2$ boundary of the ball and the  Chern classes are homotopy invariant, we have that $c_1(\tilde \L_m)=c_1(L_{\phi_j}^*( \L_m))=L_{\phi_j}^*c_1(\L_m)=sign(det(L_{\phi_j}))2m$ where $\tilde \L_m$ corresponds to the line bundle of $H(\bk)$ and $\L_m$ is the line bundle of \S\ref{multcrosspar}.
The last equation comes from the fact that the degree of the morphism $L_{\phi_j}$  is given by the sign of the determinant, that is $+1$ if $L_{\phi_j}$ is orientation preserving on the ambient $\R^3$ and $-1$ if it is orientation reversing.
\end{proof}

As a corollary, we obtain  a result which can be found in  \cite{simon}:
\begin{cor}
\label{doublecor}
 In particular, in the case of a double crossing, that is a singularity of the type $A_1$ without any additional assumption,
 $P\tilde H_1P$ is a traceless $2\times 2$ matrix and hence $P\tilde H_1P$
is always of the form $L_{\phi_j}(\bx)\bS$ and hence of spin type. If the $n+1$-th and $n$-th band cross then the local charges are $sign(det(L_{\phi_j}))$.
\end{cor}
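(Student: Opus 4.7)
The plan is to reduce the statement to a direct application of the preceding theorem on spin-type models and its corollary on local charges, exploiting the coincidence that the space of traceless $2\times 2$ Hermitian matrices is three-dimensional, matching the dimension of the base.

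First, I would pass to the $2\times 2$ degenerate block and split off its trace. At an $A_1$ crossing of the $n$-th and $(n+1)$-th bands, the spectral projector $P_j$ onto the $2$-fold degenerate eigenspace has rank $2$, so the first-order perturbation $P_j\tilde H_1(\bx)P_j$, expressed in an orthonormal basis of $\mathrm{Im}\,P_j$, is a $2\times 2$ Hermitian matrix depending linearly on $\bx$. Writing it as $\mathbf a_j(\bx)\,\mathrm{id}+B(\bx)$ with $B(\bx)$ traceless, the key point is the observation from \S\ref{2x2case}: the Pauli matrices $\sigma_x,\sigma_y,\sigma_z$ span the real three-dimensional space of traceless $2\times 2$ Hermitian matrices, and $\tfrac12\bsigma=(S_x,S_y,S_z)$ is precisely the spin-$\tfrac12$ representation of $su(2)$. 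Hence there is a unique linear map $L_{\phi_j}:\mathbb R^3\to\mathbb R^3$ such that $B(\bx)=L_{\phi_j}(\bx)\cdot\bS$, so the spin-type hypothesis with $s_j=\tfrac12$ is automatically satisfied, without any further genericity assumption beyond being an $A_1$ crossing.

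Second, I would address the only delicate point, namely that $L_{\phi_j}$ actually defines a linear \emph{isomorphism}, as required for the preceding theorem to apply. If $L_{\phi_j}$ had a nonzero kernel vector $\mathbf v$, then along the line $\bk_0+t\mathbf v$ the $P_j$-block $B$ would vanish to first order, so the two degenerate eigenvalues would remain coincident to first order; this contradicts the assumption that $\bk_0$ is an isolated degenerate point, via transversality of $\chr$ to the codimension-$3$ $A_1$ stratum in $Herm(k)$ (cf.\ \cite{vNW}).

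Finally, I would apply the preceding corollary with $s_j=\tfrac12$: the spin-$\tfrac12$ weights are $m=\pm\tfrac12$, giving $2m=\pm 1$, so the local charges of the two crossing line bundles are $\pm\, sign(\det L_{\phi_j})$, as claimed. The main obstacle is conceptual rather than computational: making sure the orientation conventions in the Pauli identification match those of \S\ref{multcrosspar} so that the sign of $\det L_{\phi_j}$ produced by $B(\bx)=L_{\phi_j}(\bx)\cdot\bS$ agrees with the sign appearing in the previous corollary. Once the identification $\mathbb R^3\leftrightarrow \mathrm{span}_\mathbb R(S_x,S_y,S_z)$ is fixed with the standard orientation, this sign is well defined and the conclusion is immediate.
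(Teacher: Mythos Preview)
Your approach matches the paper's: the corollary there carries no separate proof, the argument being embedded in the statement itself---a traceless $2\times 2$ Hermitian matrix is automatically a real--linear combination of the Pauli matrices, hence of the spin-$\tfrac12$ form $L_{\phi_j}(\bx)\cdot\bS$, and then the preceding corollary with $m=\pm\tfrac12$ yields the local charges $\pm\,sign(\det L_{\phi_j})$. Your explicit trace splitting $\mathbf a_j(\bx)\,\mathrm{id}+B(\bx)$ is in fact slightly more careful than the paper, which asserts $P\tilde H_1P$ is traceless even though a block of a traceless matrix need not be.

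You go beyond the paper in trying to verify that $L_{\phi_j}$ is a genuine linear \emph{isomorphism}, as the definition of spin type demands. That extra care is welcome, but the argument you give has a gap: a nonzero kernel vector $\mathbf v$ only forces the two eigenvalues to agree to \emph{first} order along $\bk_0+t\mathbf v$; they may still separate at second order, so $\bk_0$ can remain an isolated point of $\Tdeg$. Invoking ``transversality of $\chr$ to the codimension-$3$ $A_1$ stratum'' does not rescue this, since that transversality is precisely equivalent to $L_{\phi_j}$ being invertible and is not a consequence of isolatedness alone. The paper does not address this point either; it implicitly reads ``isolated $A_1$ point'' in the sense of Simon's \emph{normal singular} (degeneracy lifted to first order in every direction, cf.\ \S\ref{multcrosspar}), under which the isomorphism hypothesis is built in and your first and third paragraphs already constitute the complete argument.
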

\label{negcor}
The following corollary is also very useful.
\begin{cor}
If $H(\bk)$ is of spin--type $(s_1,\dots,s_l)$ at $\bk_0$, then $-H(\bk)$ is of spin--type $(s_l,\dots,s_1)$
with the opposite chirality.
\end{cor}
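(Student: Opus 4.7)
The plan is to verify the two claims separately, namely (i) that the ordered spin data reverses, and (ii) that the chirality of each block flips. Both follow almost directly from the definition of spin type, combined with the observation that negation changes the sign of eigenvalues and hence reverses the ordering by the convention of $\S\ref{evalpar}$.

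First, I would treat the block structure. If $H(\bk_0)$ has eigenvalues $\lambda_1 \leq \dots \leq \lambda_k$ grouped so that the $j$-th block of equal eigenvalues has multiplicity $2s_j+1$, then $-H(\bk_0)$ has eigenvalues $-\lambda_k \leq \dots \leq -\lambda_1$. Reading these in increasing order shows that the degenerate blocks of $-H(\bk_0)$ at $\bk_0$ have multiplicities $2s_l+1, 2s_{l-1}+1, \dots, 2s_1+1$, so the singularity type is $(A_{2s_l},\dots,A_{2s_1})$ and the projectors onto the degenerate eigenspaces are exactly the same orthogonal projectors $P_l, P_{l-1}, \dots, P_1$ as for $H$, just enumerated in the reversed order.

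Next, I would plug into the first--order perturbation theory. From $P_j[H(\bk_0+\bx)-H(\bk_0)]P_j = {\bf a}_j\bx\, \mathrm{id} + L_{\phi_j}(\bx)\cdot\bS + O(\bx^2)$ we immediately obtain
\[
P_j[-H(\bk_0+\bx)+H(\bk_0)]P_j \;=\; (-{\bf a}_j\bx)\,\mathrm{id} + \bigl(-L_{\phi_j}(\bx)\bigr)\cdot\bS + O(\bx^2),
\]
so the role of $L_{\phi_j}$ for the Hamiltonian $-H$ is played by the linear map $-L_{\phi_j}$, with the same spin representation $\bS$ on the same $(2s_j+1)$-dimensional block. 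Since $L_{\phi_j}$ acts on $\R^3$, we have $\det(-L_{\phi_j}) = (-1)^3\det(L_{\phi_j}) = -\det(L_{\phi_j})$, so the chirality $\mathrm{sign}(\det L_{\phi_j})$ reverses. Combining with the reversal of the block ordering in step one gives precisely that $-H$ is of spin type $(s_l,\dots,s_1)$ with opposite chirality in every block, which is the assertion.

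The only subtle point, and the step I would expect to require the most care, is confirming that the negated first--order datum $-L_{\phi_j}\cdot\bS$ really is spin type in the sense of the definition, rather than some other representation of $su(2)$ that happens to have the right dimension. This is handled by absorbing the minus sign into the linear map $L_{\phi_j}$ rather than into $\bS$: the spin representation $\bS$ is left untouched, so the defining condition $\bigl(\text{linear map on }\R^3\bigr)\cdot\bS$ is met verbatim, and only the determinant bookkeeping needs to be done in $\R^3$.
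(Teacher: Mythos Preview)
Your proof is correct and follows essentially the same approach as the paper's own proof: both observe that negation reverses the eigenvalue ordering (hence the spin tuple), compute $P_j[-H(\bk_0+\bx)+H(\bk_0)]P_j = -{\bf a}_j\bx\,\mathrm{id} - L_{\phi_j}(\bx)\cdot\bS + O(\bx^2)$, and conclude the chirality flips via $\det(-L_{\phi_j}) = -\det(L_{\phi_j})$. Your write-up is in fact more careful about the indexing and about the point that the minus sign is absorbed into the linear map rather than into $\bS$, but the argument is the same.
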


\begin{proof}
The  $j$th Eigenvalue of $-H$ is the $l-j$th Eigenvalue of  $P_j[-H(\bk_0+\bx)-(-H(\bk_0))]P_j=-P_{l-j}[H(\bk_0+\bx)-H(\bk_0)]P_{l-j}=
-{\bf a}\bx + -L_{\phi}(\bx) \cdot \bS +O(\bx^2)$ and  the chirality is $sign(det(-L_{\phi_j}))=-sign(det(L_{\phi_j}))$.
\end{proof}

\subsubsection{Berry phases in 2d}
The above calculations can also be truncated to 2d, that is 2d subfamilies in the 3d family $\bx\cdot \bS$. We say that a 2d isolated singular point in $\Tdeg$ is of spin type $(s_1,\dots,s_k)$ if
to first order deformation theory the local family is a 2d subfamily of a 3d subfamily of spin type $(s_1,\dots,s_k)$.

The intersection of an embedded 2d subfamily with a Dirac point (i.e.\ it contains $(0,0,0)$), with a small $S^2$ around $(0,0,0)$ will be a non--empty  closed curve  and the monodromy is given by \eqref{berry2eq}.

A common type is the equatorial subfamily $z=0$ that is $xS_x+yS_y$. We define the chirality analogously as $sign(det(L_{\phi_j}))$ where now $L_{\phi_j}$ is a $2\times 2$ matrix.

\begin{lem}
For an equatorial subfamily the value of $\gamma(C)$ defined by the upper hemispheres is given by
$\gamma(C)=\pm 2\pi m$, that is \eqref{berry2eq}, with the additional sign given by the chirality.  \qed
\end{lem}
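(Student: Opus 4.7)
The strategy is to reduce to the equatorial slice of the standard 3d model $\bx\cdot\bS$ via the linear map $L_{\phi_j}$ encoded in the 2d spin-type data, and then transport formula \eqref{berry2eq} through this diffeomorphism while tracking orientations.

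First, following the proof of the spin-type theorem, I would replace $H(\bk)$ near $\bk_0$ by its first-order expansion, which by the definition of 2d spin type has the form $\mathbf{a}\cdot\bx\, id + L_{\phi_j}(\bx)\cdot\bS_{2d}$ with $\bS_{2d}=(S_x,S_y)$. The trace term shifts all Eigenvalues uniformly and therefore contributes nothing to the Berry connection on the $\L_m$. Higher-order corrections can be removed by the same scaling homotopy used before, which keeps $\bk_0$ as the unique degeneracy in a small neighbourhood; since the Berry curvature $F$ depends continuously on the family and the homotopy stays non-degenerate on the bounding disk $D$, the integral $\int_D F$ is preserved.

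Second, I would embed the normalized 2d family as the $z=0$ slice of a 3d family by adjoining $zS_z$ and extending $L_{\phi_j}$ to the block-diagonal isomorphism $\tilde L := L_{\phi_j}\oplus(1)$ of $\R^3$, so that the extended family is $\tilde L(\bx,z)\cdot\bS$. The equatorial curve $C$ is precisely the equator of the bounding small $S^2$ around $\bk_0$, and the chosen bounding surface is the upper hemisphere $z\geq 0$. The linear change of coordinates $\mathbf{y}=\tilde L(\bx,z)$ identifies this extension with the standard model of \S\ref{multcrosspar}, for which \eqref{berry2eq} gives Berry phase $2\pi m$ along its counterclockwise equator bounded by its standard upper hemisphere.

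Third, pulling back the Berry curvature by $\tilde L$ multiplies its integral over $D$ by $\mathrm{sign}(\det\tilde L) = \mathrm{sign}(\det L_{\phi_j})$, which is exactly the chirality; the image $\tilde L(D)$ is either the upper or the lower hemisphere of the model sphere depending on this sign. Replacing upper by lower hemisphere changes the phase by the Chern-class contribution $4\pi m$, which vanishes modulo $2\pi$ and is therefore absorbed by the intrinsic $2\pi$-ambiguity of the Berry phase. Hence $\gamma(C)=\mathrm{sign}(\det L_{\phi_j})\cdot 2\pi m$, proving the claim. The only real work is bookkeeping of orientations under $\tilde L$ and checking that the hemisphere ambiguity is absorbed mod $2\pi$; both are routine once the reduction to the model is in place.
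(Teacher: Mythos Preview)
The paper offers no proof---the \qed marks this lemma as immediate from formula \eqref{berryeq} and the definition of 2d chirality given just above it. Your argument supplies the details, and the core mechanism (extend the equatorial family to 3d via $\tilde L=L_{\phi_j}\oplus(1)$ and $zS_z$, then transport \eqref{berry2eq} through $\tilde L$) is correct and is essentially what the paper expects the reader to see.

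Two points to tighten. First, the claim in step~1 that $\int_D F$ is ``preserved'' under the scaling homotopy because $F$ varies continuously is not valid: $D$ is a hemisphere with boundary, and $\int_D F$ is a real number that can drift continuously under a homotopy of families; only the integral over a \emph{closed} surface is locked by integrality. Fortunately the lemma, as phrased, concerns the linearised equatorial model $L_{\phi_j}(\bx)\cdot(S_x,S_y)$ itself, so there are no higher-order terms to remove and step~1 is simply unnecessary. Second, in step~3 you write that $\tilde L(D)$ is ``either the upper or the lower hemisphere \dots\ depending on this sign'', but since $\tilde L$ fixes the $z$-coordinate the image always lies in the upper half-space; what changes with $\mathrm{sign}(\det L_{\phi_j})$ is the \emph{orientation} of the pushed-forward surface, and that orientation reversal is precisely what produces the sign in $\int_D\tilde L^*F=\mathrm{sign}(\det\tilde L)\int_{D_{\mathrm{std}}}F$. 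The subsequent $4\pi m$ remark is then beside the point: no hemisphere swap occurs, and the lemma asserts a specific value (the one fixed by the upper hemisphere), not a residue mod $2\pi$. The first sentence of your step~3 already contains the complete argument.
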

\begin{rmk}
Notice that the sign of $\gamma(C)$ depends on the choice of the upper hemisphere as spanning surface, cf.\ \S\ref{Berrypar}. The physical Berry phase does not depend on this. It is however interesting to see the different chiralities that appear in one family, e.g.\ that of graphene, cf.\ \S\ref{Graphenesec}.
\end{rmk}

\section{Topological Charges and Slicing}
 To obtain effective global constraints, we recall the technique of slicing, cf.\ e.g.\ \cite{momentum,Xu&al}.
The idea is that we can evaluate the first Chern class of a line bundle with a connection on a 2--dimensional submanifold by pulling back, i.e.\ restricting, the line bundle to the surface and integrating the pulled--back curvature form of the connection over the surface. Explicitly, if $\Sigma$ is an oriented compact surface and $i:\Sigma\to T_0$ is an embedding, then

\begin{equation}
\label{surfacechargeeq}
Q_{\Sigma,i}:=\int_{\Sigma} i^*c_1(\L_i)=\langle c_1(\L_i), i_*([\Sigma])\rangle
\end{equation}
where $\langle \;,\;\rangle$ is the standard pairing between cohomology and homology.
Notice that by the results of Thom \cite{Thom} all second  homology classes are of this type, even over $\Z$ and hence representing all cohomology classes in this way, the numerical Charges $Q_{\Sigma,i}$ fix the cohomological charge.

\subsection{3--dimensional torus models}

For concreteness and with the applications in mind, we will now recall the case where $T=T^3$ and restrict the charges to those coming from $c_1$; for a more general discussion, see \cite{momentum}.
We represent  $T^3$ as the cube $[0,2\pi]^3$ with periodic boundary conditions $0\sim2\pi$. In particular, we will write $-t$ for $2\pi-t$.
We can then consider the embedding of $T^2$ into $T^3$ at ``height $t$''. That is, the slicing with respect to the $z$ coordinate is defined by
\begin{equation}
\phi_t(\theta_1,\theta_2)=(\theta_1,\theta_2,t)
\end{equation}
and the other two coordinate slicings are defined analogously.

Given a family $H:T^3\to Herm(k)$ we obtain the functions

\begin{equation}
\chi_i(t):=\int_{T^2}\phi_t^*c_1(\L_i), \quad {i=1\cdots k}
\label{slicefuntions}
\end{equation}
For all $t$ such that $\phi_t(T^2)\subset T_0$, that is, it  does not contain any degenerate points.

We now assume that  $\Tdeg$ is regular, that is
its components are finitely many contractible sub--manifolds. This notion is  less restrictive than the one used in \cite{momentum}.
This implies that the coordinate projections of $\Tdeg$ are finitely many points and intervals in each $S^1$.
We will also assume that they are in generic position with respect to an identification $T^3\simeq S^1\times S^1\times S^1$. This means that all their coordinate projections $\pi_k:T^3\to S^1, j=1,2,3$ for any two components are non--intersecting. We can always obtain  generic position by using a diffeomorphism homotopic to the identity.

Notice that in this situation, the slicing only gives a finite set of numbers for each Eigenbundle, since the integral over the Chern--class is invariant under homotopy and hence the $\chi_i$ are locally constant and constant in the components $S^1\setminus \pi_k(\Tdeg)$.

In this case the following generalisation of \cite[Theorem 3.13]{momentum} applies:

\begin{thm}
\label{slicethm}
For a smooth variation with base $T^3$ and regular $\Tdeg$, which we may assume to be in generic position, the functions obtained from the slicing method corresponding to all three coordinate projections completely determine the $K$--theoretic charges and hence the line bundles $\L_i$ up to isomorphism.
\end{thm}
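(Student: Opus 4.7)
The theorem reduces to showing that the three slicing functions $\chi_i^{(k)}(t)$, $k=1,2,3$, determine $c_1(\L_i)\in H^2(T_0;\Z)$, since line bundles on the CW complex $\bar T_0\simeq T_0$ are classified up to isomorphism by their first Chern class and this pins down $[\L_i]\in K(T_0)$. I would start by computing $H_2(\bar T_0)$ via the long exact sequence of the pair $(T^3,\bar T_0)$ combined with excision. Letting $N$ denote the number of components of $\Tdeg$, each contained in a ball $N_{T_c}$, one has $H_n(T^3,\bar T_0)\cong\bigoplus_c H_n(B^3,S^2)$, which equals $\Z^N$ for $n=3$ and vanishes for $n=2$. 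Since $\bar T_0$ is a compact oriented 3-manifold with nonempty boundary, $H_3(\bar T_0)=0$, and the resulting sequence
\begin{equation*}
0\to\Z\xrightarrow{\Delta}\Z^N\to H_2(\bar T_0)\to H_2(T^3)\to 0
\end{equation*}
(with $\Delta$ the diagonal up to signs induced by fixed orientations) yields $H_2(\bar T_0)\cong\Z^3\oplus\Z^{N-1}$, with a natural basis given by the three coordinate 2-tori $[T^2_{jk}]$ together with the small sphere classes $[\partial N_{T_c}]$ modulo the single relation $\sum_c[\partial N_{T_c}]=0$. Since $H_1(T_0)=H_1(T^3)=\Z^3$ is torsion-free, universal coefficients identifies $H^2(T_0;\Z)$ with $\mathrm{Hom}(H_2(T_0),\Z)$, so $c_1(\L_i)$ is determined by its pairings against this basis.

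Next I would read those pairings off the three slicing functions. On each connected component $U$ of $S^1\setminus\pi_k(\Tdeg)$ the slice $\phi_t$ lands in $T_0$ and $\chi_i^{(k)}(t)=\langle c_1(\L_i),\phi_{t*}[T^2]\rangle$ is constant in $t\in U$; evaluated on a fixed reference component this gives the pairing with the coordinate 2-torus transverse to direction $k$, so the three slicings supply all three of the coordinate-torus pairings. When $t$ crosses a point of $\pi_k(\Tdeg)$, generic position guarantees that exactly one component $T_c$ is being swept, the slice cycle changes by $\pm[\partial N_{T_c}]$ in $H_2(\bar T_0)$, and the induced jump in $\chi_i^{(k)}$ equals $\pm\langle c_1(\L_i),[\partial N_{T_c}]\rangle$ with a sign pinned down by comparing the slicing direction to the outward normal of $N_{T_c}$. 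Since each $T_c$ is swept by at least one of the three slicings, all local sphere pairings are recovered; the elementary fact that the jumps of the periodic $\chi_i^{(k)}$ around $S^1$ sum to zero exactly reflects the global relation $\sum_c[\partial N_{T_c}]=0$, so no information is missing and no redundancy is unaccounted for.

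The main obstacle is the bookkeeping step that assembles these pieces into a genuine dual basis of $H_2(\bar T_0)$ over $\Z$ rather than merely over $\Q$. Linear independence follows from splitting the pairings according to the exact sequence above: the three constant values detect the image in $H_2(T^3)=\Z^3$ because the coordinate tori map to a basis, while the jumps detect the kernel, which is the span of the sphere classes modulo the one relation. Integrality is automatic since the slicing pairs $c_1(\L_i)$ with integral homology cycles. With these verifications $c_1(\L_i)\in H^2(T_0;\Z)$ is uniquely determined, and since line bundles on $T_0$ are classified by their first Chern class, this pins down $[\L_i]\in K(T_0)$ up to isomorphism, establishing the theorem.
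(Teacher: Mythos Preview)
Your argument is correct and in fact more self-contained than the paper's. The paper proceeds by invoking a specific CW model: it uses the three coordinate projections to build a grid on $T^3$ whose $3$-cells each contain exactly one component of $\Tdeg$, contracts those cells to their boundaries to get a CW model for $T_0$, and then cites the earlier result \cite[Theorem 3.13]{momentum} for the actual identification of $H_2$ and the pairing. You instead compute $H_2(\bar T_0)$ directly from the long exact sequence of the pair $(T^3,\bar T_0)$ and excision, obtain the same description (coordinate torus classes plus sphere classes modulo the single relation $\sum_c[\partial N_{T_c}]=0$), and then read the pairings off the constant values and jumps of the slicing functions. The two routes reach the same generating set for $H_2$; the paper's grid is essentially a geometric realization of your splitting $0\to\Z^{N-1}\to H_2(\bar T_0)\to\Z^3\to 0$. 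What your approach buys is independence from the reference and a clean algebraic reason (freeness of $H_1(T_0)$, hence vanishing $\mathrm{Ext}$) for why pairing against this basis determines $c_1$ over $\Z$; what the paper's grid buys is a concrete cellular model in which the cycles you pair against are visibly the $2$-cells, which is convenient for the later explicit computations in \S5.
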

\begin{proof}
The main ingredient in the proof was a CW complex obtained by a grid given choosing points in the components of the $S^1\setminus \pi_k(\Tdeg)$. In this grid, each 3--cell contains one component of the degenerate locus. Since this is contractible, in computing the homology, we are reduced to the case of \cite[Theorem 3.13]{momentum}. In particular contracting all the 3--cells to their boundary, we obtain a CW model for $T_0$ and the theorem follows as in \cite{momentum}.
\end{proof}

%

\subsection{Jumps and local charges}

The  locus of discontinuity for each function is a closed set consisting of isolated points and intervals.
For each component $T_c\in \Tdeg$ we  define the jump at $T_c$ as follows.
If $\pi_k(T_c)$ is  a critical interval $I_c=[t_c^1,t_c^2]$ then we set
\begin{equation}
j_i(T_c)=\chi_i(t^2_c+\eps)-\chi_i(t^1_c-\eps)
\end{equation}
If  $T_c$ is an isolated point and $t_c=\pi_k(T_c)$, then $t^1_c=t^2_c=t_c$ and we set
\begin{equation}
j_i(t_c):=j_i(T_c)=\chi_i(t_c+\eps)-\chi_i(t_c-\eps) \text{ for small $\eps$}.
\end{equation}

\begin{rmk}
The significance of these jumps is as follows:
 For this, consider a regular neighborhood $N_{T_C}$
 of $T_c$ and choose  $\eps/2$ be such that $\pi_k(N){T_c}\subset (t_c^1-\eps/2,t_c^2+\eps/2)$, where $\pi_k$ is the projection under consideration. Let
  $S^2_{T_c}$ be a boundary part of this neighborhood which is diffeomorphic to a sphere and  let $B_{T_c}$ be the open part which is diffeomorphic to the ball inside of the sphere. Now consider the 3--manifold $T_{slice}$ between two slices, that is e.g. $\{(\phi_1,\phi_2,t):t\in [t-\eps, t+\eps]\}\cap T\setminus B_{T_c}$. Then since Chern forms are closed: $0=\int_{T_{slice}} dc_1(\L_i)=-\int_{S^2_{T_c}}c_1(\L_i)+\int_{T^2}\phi^*_{t_c^2+\eps}c_1(\L_i)-\int_{T^2}\phi^*_{t_c^1-\eps}c_1(\L_i)$
by Stokes and hence $j_i(T_c)=\chi_i(t^2_c+\eps)-\chi_i(t^1_c+\eps)=\int_{S^2_{T_c}}c_1(\L_i)$.
In other words {\em the jumps equal the local charges}.
Thus, if we know the local models, we have the information about the local charges and hence in the slicing method, we know the jumps.
\end{rmk}

\begin{rmk}
If a slice $\phi_t$ cuts  $\Tdeg$ in isolated points, we can use Berry phase analysis.
If one knows for instance we have equatorial 2d singularities, one can determine the chiralities and Berry phases around these points.
This provides an alternative approach for the analysis.
\end{rmk}
%

\section{Global constraints}
Fix a system with base $T^3$ and a slicing in generic position with respect to the projection $\pi_k:T^3\to S^1$. Let $t\in S^1$ be the slicing parameter and $\pi_k(\Tdeg)=\Sdeg\subset S^1$ be the locus of
points such that $\phi_t:T^2\to T^3$ hits the critical locus $T_{crit}$.

\subsection{Global Constraints for the Slicing Charges}

\begin{thm}[Part 1]
\label{constrainthm} The periodic functions $\chi_i$  defined in eq.\ (\ref{slicefuntions}) satisfy the following:
\begin{enumerate}
\item They are locally constant on $S^1\setminus \Sdeg$, moreover they are stepfunctions with integer values.
\item $\sum_1^k\chi_i\equiv 0$.
\item For every component $T_c$ of $\Tdeg$: $\sum_1^k j_i(T_c)=0$
\item $\sum_{T_c\in \Tdeg} j_i(T_c)=0$, where $T_c$ runs over the isolated critical points and a choice of point for
each of the critical intervals.
\item The jumps at an isolated double crossing are given by $j_i(t_c)=\pm\sigma$ (the sign is determined by Corollary \ref{doublecor}) and the jumps at a multiple crossing of local type ${\bf x} \cdot {\bf L}$ are given by $j_m(t_c)=\pm m$ (according to \S\ref{multcrosspar}).
 \end{enumerate}
\end{thm}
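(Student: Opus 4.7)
The plan is to treat the five assertions in order, leaning on the remark that identifies each jump $j_i(T_c)$ with the local Chern charge $\int_{S^2_{T_c}} c_1(\L_i)$, and on the fact that the ambient Bloch bundle $\Bl$ is trivial.

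For (1), I would observe that $c_1(\L_i)$ is an integral cohomology class on $T_0$, so that the pairing $\langle c_1(\L_i), \phi_{t,*}[T^2]\rangle$ is an integer for every $t$ such that $\phi_t(T^2)\subset T_0$. As $t$ varies inside a single connected component of $S^1\setminus \Sdeg$, the maps $\phi_t$ give a homotopy of embeddings $T^2\hookrightarrow T_0$, hence push forward $[T^2]$ to the same homology class, so $\chi_i(t)$ is constant there; between components the value can jump by an integer. This gives the step-function claim.

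For (2), I would use $\Bl_0=\bigoplus_i \L_i$ together with triviality of $\Bl=T^3\times\C^k$. Whitney's product formula gives $\sum_i c_1(\L_i)=c_1(\Bl_0)=0$ in $H^2(T_0)$, hence $\sum_i \chi_i(t)=\int_{T^2}\phi_t^*\left(\sum_i c_1(\L_i)\right)=0$. For (3), the same triviality applied on $S^2_{T_c}$, combined with the identification of jumps with local charges from the Remark following the theorem statement, gives $\sum_i j_i(T_c)=\int_{S^2_{T_c}}\sum_i c_1(\L_i)=0$. For (4), I would use periodicity of $\chi_i$: traversing $S^1$ once and summing the net jumps must return us to the starting value, so $\sum_{T_c}j_i(T_c)=0$; here the reduction of critical intervals to a single representative is legitimate because on any interval $I_c$ the function $\chi_i$ is constant outside the endpoints and the definition of $j_i(T_c)$ collapses the interval into one jump.

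For (5), I would invoke the Remark once more to write $j_i(T_c)=\int_{S^2_{T_c}} c_1(\L_i)$, and then read off the right-hand side from the local-model computations already in the paper: Corollary \ref{doublecor} gives $\pm 1$ (hence $\pm\sigma$ once we record the chirality sign $\sigma=\mathrm{sign}(\det L_{\phi_j})$) at an isolated $A_1$ singularity, while \S\ref{multcrosspar} together with the spin-type corollary gives $\pm 2m$ (written as $\pm m$ in the statement up to the conventional factor used there) for an isolated spin singularity. The main obstacle I anticipate is in (1) and the Remark: one must be careful that the slicing surfaces $\phi_t(T^2)$ genuinely lie in $T_0$ and can be connected by a homotopy inside $T_0$, which uses the generic-position assumption on $\pi_k(\Tdeg)$, and one must justify applying Stokes' theorem on the sliced-open solid region $T_{\mathrm{slice}}$ after removing the open ball $B_{T_c}$ --- essentially checking that the boundary decomposes cleanly into the two $T^2$ slices and the sphere $S^2_{T_c}$ with the correct orientations. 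Once this bookkeeping is in place, each assertion reduces to a one-line application of Stokes' theorem or the product formula for Chern classes.
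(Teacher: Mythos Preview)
Your proposal is correct and follows essentially the same approach as the paper: (1) by homotopy invariance and integrality, (2) by triviality of $\bigoplus_i\L_i$, (4) by periodicity, and (5) via the Stokes argument identifying jumps with local charges. The only minor difference is in (3): the paper simply notes that (2) implies (3) directly (since $\sum_i\chi_i\equiv 0$ forces $\sum_i j_i(T_c)=0$), whereas you rederive it by restricting the trivial bundle to $S^2_{T_c}$; both are equivalent one-liners.
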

\begin{proof}
The first statement is straightforward, the second follows from the fact that the $\bigoplus_i \L_i$ is a trivial line bundle and implies (3). The fourth statement is the periodicity of the functions $\chi$ and the last statement follows by Stokes for a small sphere around the isolated critical point.
\end{proof}

\begin{cor}
On $T^3$ for $k=2$ there are no families with a single regular critical (aka.\ Weyl) point. If $\Tdeg$ only has regular isolated  points, such points appear in pairs with opposite chirality.
\end{cor}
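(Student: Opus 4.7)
The plan is to reduce the statement to an application of Theorem \ref{constrainthm}, part (4), after identifying the individual jump contributions using Corollary \ref{doublecor}. First I would choose any of the three coordinate projections $\pi_k : T^3 \to S^1$ and put $\Tdeg$ in generic position with respect to it (using a diffeomorphism homotopic to the identity, as in the setup preceding Theorem \ref{slicethm}). This gives the slicing functions $\chi_1, \chi_2$ from equation \eqref{slicefuntions}, each of which is an integer step function on $S^1$ with discontinuities only at the projected images of the critical points.

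Next I would analyze a single jump. At an isolated Weyl point $T_c$, since $k=2$, the projector $P$ onto the (full) crossing eigenspace is the identity, so the first-order perturbation $\tilde H_1$ is automatically a traceless $2 \times 2$ matrix. Corollary \ref{doublecor} then applies without further assumption: the point is of spin type $1/2$, and the two jumps are $j_1(T_c) = -j_2(T_c) = \pm 1$, where the sign is the chirality $\operatorname{sign}(\det L_{\phi})$. In particular every Weyl jump is nonzero.

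Then I would invoke Theorem \ref{constrainthm}(4), which says $\sum_{T_c \in \Tdeg} j_i(T_c) = 0$ for each band $i$. If a single Weyl point existed as the entire $\Tdeg$, the sum would equal $\pm 1 \neq 0$, contradicting (4); hence such a configuration is impossible on $T^3$. More generally, when $\Tdeg$ consists only of isolated regular crossings, specializing (4) to $i=1$ forces $\#\{\text{chirality }{+1}\} = \#\{\text{chirality }{-1}\}$, so the points group combinatorially into pairs of opposite chirality, which is the desired conclusion.

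There is essentially no conceptual obstacle beyond correctly lining up the sign conventions: the only thing to check is that the chirality in Corollary \ref{doublecor} is exactly the sign appearing in the jump formula of Theorem \ref{constrainthm}(5), which is immediate from the remark identifying jumps with local charges $\int_{S^2_{T_c}} c_1(\L_i)$. Once that identification is in place, the corollary follows in a single line from the periodicity constraint, without needing to introduce any further geometric structure.
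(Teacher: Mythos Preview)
Your proof is correct and follows essentially the same approach as the paper: each Weyl point contributes a jump of $\pm 1$ to $\chi_1$ (via Corollary~\ref{doublecor}, equivalently item~(5) of Theorem~\ref{constrainthm}), and the periodicity constraint of item~(4) then forces the total to vanish, so the chiralities cancel in pairs. The paper's version is terser and cites item~(3) where it evidently intends~(4), but the underlying argument is identical to yours.
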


\begin{proof}
If there was only one regular critical point then there would only be one jump by $\pm 1$  for the functions $\chi_i, i=1,2$ and this would violate (3).
In order to obtain $0$ as the total jump, one has to have as many jumps up as down, which proves the second statement.
\end{proof}


\subsection{Global Constraints from Time Reversal Symmetry}
One says that $H:T\to Hermk(k)$ has a time reversal symmetry (TRS) if there is a pair consisting of an involution  $\tau$ on  $T$ and an anti--unitary operator $\Theta$ for which $\Theta^2=\pm 1$
such that $\Theta H(k) \Theta^*=H(\tau(k))$,  see \cite{Wigner,Notes}. As an anti--unitary operator there is a decomposition $\Theta=CU$ where $C$ is conjugation and $U$ is unitary.

Typical examples  are $T=T^n$, $\tau(k)=-k$, $\Theta=C$, $\Theta^2=1$ and $$\bar H(k)= H(-k)$$ Since $H$ is  Hermitian the pull back w.r.t.\ $\tau$ will be the identity on the Eigenvalue cover, in other words,
as is well--known, the full spectrum will be symmetric with respect to the involution, i.e.\ $\{\lambda_i(t)\}=\{\lambda_i(-t)\}$.
For the Eigenbundle geometry the symmetry implies that $\tau^*(\L_i)=\bar\L_i$ is the complex conjugate bundle, and hence has the negative Chern class of $\L_i$.
\begin{equation}
\label{involeq}
\tau^*c_1(\L_i)=c_1(\tau^* \L_i)=c_1(\bar\L_i)=-c_1(\L_i)
\end{equation}

\subsubsection{Example: Global constraints in the 3d torus case}

This allows us to add  to Theorem \ref{constrainthm}.

\begin{customthm}{\ref{constrainthm}}[Part 2]
If $H:T^3\to Herm(k)$ has TRS given as above and $\Tdeg$ is regular and in generic position then:
\begin{enumerate}
\item [(6)] $\chi_i(t)=-\chi_i(-t)$.
\item [(7)] The jumps at $t=0,\pi$ must be in $2\Z$.   Hence, if the local model is the spin model, the spin has to be integer. Furthermore the jumps are symmetric that is they go from $-\frac{1}{2}j_i(0)$, respectively $-\frac{1}{2}j_i(\pi)$, to $\frac{1}{2}j_i(0)$, respectively $-\frac{1}{2}j_i(\pi)$. In particular, if the  jump is $0$, then $\chi_i$ is $0$ as well in a neighborhood of $0$, respectively $\pi$.
\end{enumerate}
\end{customthm}
\begin{proof}
First notice that $\tau$ maps the slice at $t$ to the slice at $-t$. Now we can compute:
\begin{multline*}
\chi_i(t)= \int_{T^2}\phi_t^*(c_1(\L_i))=\int_{T^2}\tau^*\phi^*_{-t}(c_1(\L_i))=
\int_{T^2}\phi^*_{-t}(\tau^*c_1(\L_i))=\\\int_{T^2}\phi^*_{-t}(-c_1(\L_i))=-\chi_i(-t)
\end{multline*}
where we used  equation (\ref{involeq}).
By (5=6) $\chi_i(-\eps)=-\chi_i(\eps)_i$ and as $\pi\equiv -\pi \mod 2\pi$, $\chi_i(\pi-\eps)=-\chi_i(\pi+\eps)$ hence
 $j_i(0)=\chi(\eps)-\chi_i(-\eps)=2\chi_i(\eps)\in 2\Z$ and $j_i(\pi)=\chi_i(\pi+\eps)-\chi_i(\pi-\eps)=2\chi_i(\pi+\eps)$.
\end{proof}

\begin{cor}

On a 3d torus  family with time reversal symmetry:
\begin{enumerate}
\item One may not have a Weyl point with a coordinate $0$ or $\pi$. If there is a degenerate point with these coordinates, each degeneracy must be at least $3$.
Furthermore, if the singularity is of the  type of \S\ref{multcrosspar}, then it must be of integer spin.
\item For any singularity at $t$ there is a singularity with the same jump at $-t$:
$j_i(t_c)=j_i(-t_c)$.

\item The jumps satisfy $\sum_{0<t_c<\pi}2j_i(t_c)+j_i(0)+j_i(\pi)=0$.
\item If there is a singularity with local model of spin type, in the fiber over $t_c$ then there is the same local model in the fiber of  $-t_c$.
\end{enumerate}
\end{cor}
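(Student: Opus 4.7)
All four items follow from specializing the TRS-enhanced slicing identities of Theorem \ref{constrainthm} (Part 2), together with the classification of local jumps established in \S\ref{multcrosspar} and Corollary \ref{doublecor}. I would prove them in the stated order.

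For item (1), I would start from item (7), which forces $j_i(0), j_i(\pi) \in 2\mathbb{Z}$ for every band index $i$. A regular Weyl point at $t=0$ or $t=\pi$ would contribute jumps $j_i = \pm 1$ on the two crossing bands by Corollary \ref{doublecor}; this is not even, so no such Weyl point exists, and more generally the minimal admissible crossing in these special slices is a genuine 3-fold degeneracy. For a spin-type local model of spin $s$, the band jumps run through $\pm 2m$ for $m = -s, \dots, s$; requiring $2m \in 2\mathbb{Z}$ for every $m$ in the spectrum forces $m \in \mathbb{Z}$, i.e.\ $s$ integer.

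For item (2), I would compute directly with (6). Using $j_i(t_c) = \chi_i(t_c + \epsilon) - \chi_i(t_c - \epsilon)$ and $\chi_i(t) = -\chi_i(-t)$, one has
\begin{equation*}
j_i(-t_c) = \chi_i(-t_c + \epsilon) - \chi_i(-t_c - \epsilon) = -\chi_i(t_c - \epsilon) + \chi_i(t_c + \epsilon) = j_i(t_c).
\end{equation*}
Item (3) is then immediate: item (4) of Theorem \ref{constrainthm} (Part 1) gives $\sum_{T_c} j_i(T_c) = 0$, and splitting the sum into $\tau$-pairs $\{T_c, \tau(T_c)\}$ with $0 < t_c < \pi$ (each contributing $2 j_i(t_c)$ by item (2)), together with the fixed slices at $t = 0$ and $t = \pi$, yields the claimed identity.

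For item (4), I would transport the local first-order expansion at $\bk_0$ to one at $\tau(\bk_0)$ using $\Theta H(\bk) \Theta^* = H(\tau(\bk))$. Complex-conjugating the model $L_{\phi_j}(\bx) \cdot \bS$ produces $L_{\phi_j}(\tau(\by)) \cdot \bar\bS$; since $\bar\bS$ is a spin-$s_j$ representation of $su_2$ unitarily equivalent to $\bS$, the spin decomposition $(s_1, \dots, s_l)$ is unchanged. The main technical point, and where I expect to be most careful, is matching blocks: because $H(\bk_0)$ and $H(\tau(\bk_0))$ have identical spectra with identical orderings, the projectors $P_j$ transport correctly under $\Theta$ and the ordered tuple $(s_1, \dots, s_l)$ is preserved rather than merely permuted.
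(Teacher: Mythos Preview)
Your proposal is correct and follows essentially the same route as the paper's proof. The only notable stylistic difference is in item (4): the paper carries out the computation in the explicit Zeeman basis (where $S_x,S_z$ are real and $S_y$ is purely imaginary), so that complex conjugation and the substitution $\bx\mapsto -\bx$ visibly combine to preserve the spin type and chirality, whereas you argue more abstractly via the unitary equivalence of $\bar\bS$ with $\bS$; both reach the same conclusion.
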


\begin{proof}

So we see that if there is a degenerate point with coordinate $0$ or $\pi$ it has to be at least a triple intersection.
Furthermore, we have $j_i(-t_c)=\chi_i(-t_c+\eps)-\chi_i(-t_c-\eps)=-\chi_i(t_c-\eps)+\chi_i(t_c+\eps)=j(t_c)$. This proves (2), and (3) then follows from (1) and (2).

For (4) we assume that there is a local model $H({\bk_0+\bx})-H(\bk_0)=\bx\cdot \bsigma=xS_x+yS_y+zS_z$ at $\bk_0=(t_1,t_2,t_c)$ then $H(-\bk_0+\bx))-H(-\bk_0))=\bar H({\bk_0-\bx})-\bar H(\bk_0)=-\bx\cdot \bar\bsigma=-(xS_x-yS_y+zS_z)$, where for the last line we have used the Zeeman basis for the representation in which $S_x,S_z$ have real coefficients and $S_y$ is purely imaginary. In the general case the complex conjugation adds an orientation reversal for $\bsigma$ and together with the sign of $\bx\to -\bx$  the total sign change is positive.
This is consistent with (2).

\end{proof}

For 2d families, we obtain the following version of (4)
\begin{cor}
\label{2dcor}
For a family on $T^2$ that has TRS $\bk\to -\bk$ and $\Theta=C$, if there is an equatorial Dirac point at $\bk_0$ then there is an equatorial Dirac point of opposite chirality at $-\bk_0$.
\end{cor}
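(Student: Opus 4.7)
The plan is to reduce the 2d statement to the same computation that established part (4) of the preceding 3d corollary, only truncated to the equatorial subfamily. The hypothesis gives a spin-type equatorial Dirac point at $\bk_0$, meaning to first order in $\bx$ one has
\[
P[H(\bk_0+\bx)-H(\bk_0)]P = a(\bx)S_x + b(\bx)S_y + O(\bx^2),
\]
where $P$ is the projector onto the two-dimensional degenerate eigenspace, $(S_x,S_y)$ are the equatorial spin-$\tfrac12$ matrices, and $(a(\bx),b(\bx)) = L_\phi(\bx)$ is linear of nonzero determinant. The goal is to produce a similar expansion at $-\bk_0$ whose linear map $L_{\phi'}$ has determinant of the opposite sign.

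First I would translate the local data at $\bk_0$ to local data at $-\bk_0$ using the TRS relation $\bar H(\bk)=H(-\bk)$, which gives
\[
H(-\bk_0+\bx)-H(-\bk_0) = \overline{H(\bk_0-\bx)} - \overline{H(\bk_0)} = \overline{P[H(\bk_0-\bx)-H(\bk_0)]P},
\]
after conjugating projectors appropriately. Plugging in the expansion and using $a(-\bx)=-a(\bx)$, $b(-\bx)=-b(\bx)$, this equals $-a(\bx)\overline{S_x} - b(\bx)\overline{S_y} + O(\bx^2)$.

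Next I would invoke the Zeeman basis, in which $S_x$ has real matrix entries and $S_y$ is purely imaginary, so that $\overline{S_x}=S_x$ and $\overline{S_y}=-S_y$. Substituting,
\[
H(-\bk_0+\bx)-H(-\bk_0) = -a(\bx)S_x + b(\bx)S_y + O(\bx^2),
\]
which is again equatorial spin-type with linear map $L_{\phi'}(\bx)=(-a(\bx),b(\bx))$. One checks $\det L_{\phi'} = -\det L_\phi$, so by the 2d version of the Corollary following Definition of spin type (the chirality is $\operatorname{sign}\det L_\phi$), the point $-\bk_0$ is an equatorial Dirac point of opposite chirality.

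The one step that needs care is the basis choice: the identities $\overline{S_x}=S_x$ and $\overline{S_y}=-S_y$ depend on working in the Zeeman basis, and the projectors $P$ at $\bk_0$ and $-\bk_0$ must be related consistently by the anti-unitary $\Theta=C$. This is exactly the bookkeeping carried out in the proof of part (4) of the preceding corollary; in a general basis the complex conjugation would introduce an extra orientation-reversal of $\bsigma$ which, combined with $\bx\mapsto -\bx$, still yields a net flip of $\det L_\phi$. Everything else is a routine specialisation of the 3d argument to the 2d equatorial subfamily.
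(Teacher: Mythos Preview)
Your proof is correct and follows essentially the same route as the paper: apply the TRS relation $H(-\bk_0+\bx)=\overline{H(\bk_0-\bx)}$ to the first-order expansion, use the Zeeman basis identities $\overline{S_x}=S_x$, $\overline{S_y}=-S_y$, and observe that the resulting linear map $(a,b)\mapsto(-a,b)$ has determinant of the opposite sign. The paper's proof is terser (it omits the projector bookkeeping and works directly with $xS_x+yS_y$), but the computation and the key observation---that in 2d the sign of $\bx\to -\bx$ is $(-1)^2=+1$, so only the conjugation of $S_y$ flips the chirality---are identical.
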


\begin{proof}
By the same computation as above, we find that if $H(\bk_0+\bx)-H(\bk_0)=xS_x+yS_y$ then
$H(-\bk_0+\bx)-H(-\bk_0)=-(xS_x-yS_y)=-xS_x+yS_y$ and hence the chirality changes as now the sign of $\bx\to -\bx$ is $(-1)^2=1$.
\end{proof}

\section{Specific Examples}
Although our arguments so far have been totally general, a particular application we had in mind is
the application to the different quantum wire networks given by the honeycomb lattice and the lattices corresponding to the  P, D and G periodic minimal surfaces as  discussed in \cite{kkwk,kkwk3,kkwk4,kkwk2}.  In this setup one starts with a periodic graph and a periodic Harper Hamiltonian \cite{Harper,PST} and then constructs a family of Hamiltonians from it using Bloch theory. The latter can be encoded into a finite effective graph which has extra structures of a root and a spanning tree. We will give the effective graphs and the corresponding Hamiltonians and refer to the papers above for details.

Note that in this setup, there is a possibility to incorporate a magnetic field which makes the geometry non--commutative. This will be addressed in further research.

\subsection{Graph Examples}
The examples we considered are given by the effective graphs in Figure \ref{graphfig}.
\begin{figure}
\includegraphics[width=.7\textwidth]{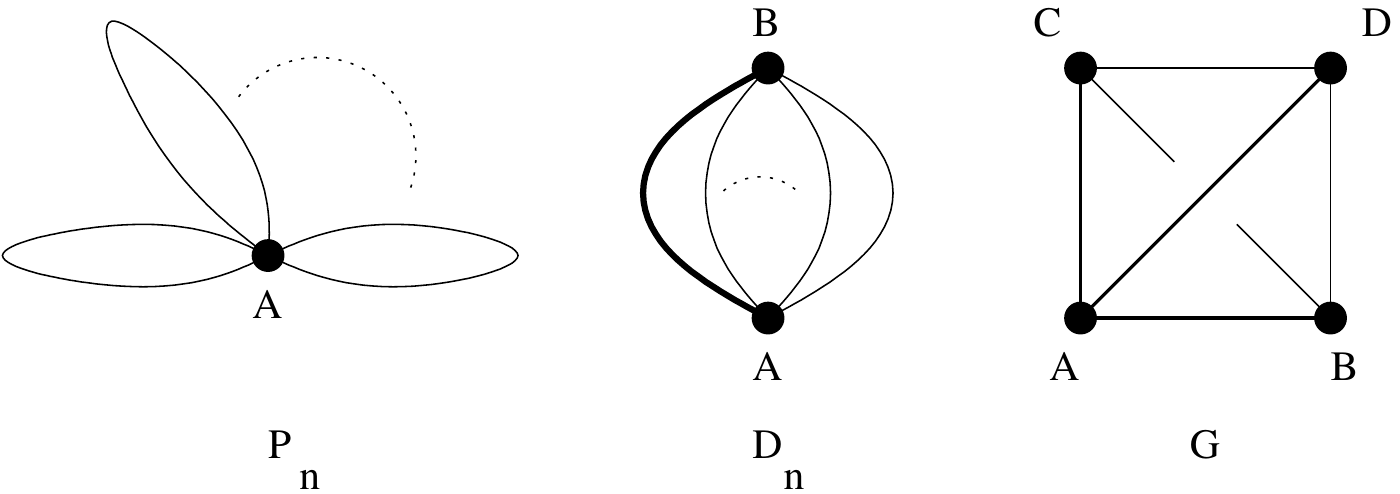}
\caption{\label{graphfig}
Graphs with  spanning trees and root $A$. The petal graphs $P_n$, with $n$ loops,
the digraphs $D_n$ with $n+1$ edges,
and the graph $G$}
\end{figure}
The dimension $d$ of the family is the number of non--spanning tree edges or the first Betti number of the graph. The family is  defined on $T^d$ and takes values in $Herm(k)$ where $k$ is the number of vertices.
The Hamiltonians are:
\begin{itemize}
\item[($P_n$)] $H(k_1,\dots,k_n)=\sum_{l=1}^n(e^{ik_l}+e^{-ik_l})$
\item[($D_n)$] $H(k_1,\dots,k_n)=\begin{pmatrix}0&1+\sum_{l=1}^ne^{ik_l}\\
1+\sum_{l=1}^ne^{-ik_l}&0
\end{pmatrix}$
\item[(G)] $H(k_1,k_2,k_3)=\begin{pmatrix}0&1&1&1\\
1&0&e^{ik_1}&e^{-ik_2}\\
1&e^{-ik_1}&0&e^{ik_3}\\
1&e^{ik_2}&e^{-ik_3}&0
\end{pmatrix}$
\end{itemize}
The $1$'s correspond to the spanning tree edges. All of these examples have TRS: $H(-{\bf k})=\bar H({\bf k})$.
They correspond to the following lattices.

\begin{enumerate}
\item $P_2$ corresponds to the square lattice, which is the simplest setting for the quantum Hall effect.
\item $P_3$ corresponds to the so--called primitive surface geometry.
\item $P_n$ in general is the geometry of a Bravais lattice.
\item $D_2$  corresponds to the honeycomb geometry (2d) which is the geometry of graphene.
\item $D_3$ which corresponds to the so--called diamond surface (3d).
 \item $G$  corresponds to the Gyroid geometry.
\end{enumerate}

The interesting three--dimensional cases from the point of view of the Eigenbundle geometry are the 3d cases $D_3$ and $G$, since $P_3$ has trivial Eigenbundle geometry as do all the $P_n$, where the Bloch bundle is just a trivial line
bundle. In all the examples, the Bloch bundle does not split into subbundles as all levels cross (or in the $P_n$ case there is only one level).

\subsection{The Honeycomb Lattice ($D_2$)}
\label{Graphenesec}
This is a two dimensional family on $T^2$. $\Tdeg$ are the two points
$(\rho_3,\bar \rho_3), (\bar \rho_3,\rho_3)$, $\rho=e^{i\frac{2\pi}{3}}$ at which there are the well known Dirac points of graphene, whose electronic properties are described by a Harper Hamiltonian: see the review \cite{CastroNeto} and references therein.

The local structure is well known and is given linearly  by the two--dimensional restriction $z=0$ of \S\ref{2x2case}. Here the  transformation matrix of the restricted version of Corollary \ref{doublecor} is $-1$ at $(\rho_3,\bar \rho_3)$ and $+1$ at $(\bar \rho_3,\rho_3)$ (see Appendix).
This allows to compute the Berry phase according to equation \eqref{berryeq}.
Notice that the two Dirac points have opposite chirality as dictated by Corollary \ref{2dcor}.

As $H^2(T_0)=0$
all the Chern charges vanish and the two Dirac points are in general not topologically stable.

\subsection{The Diamond ($D_3$)}
As computed in \cite{kkwk2},
$\Tdeg$ is given by the three circles on $T^3$ given by the equations $\phi_i=\pi, \phi_j\equiv \phi_k+\pi\; \mbox{mod}\; 2\pi$ with $\{i,j,k\}=\{1,2,3\}$.
 The singularities are double crossings of type $A_1$ but $\Tdeg$ is  not discrete and not contractible, hence not a regular case. Also, $\Tdeg$ is not smooth. There are singular points $(\pi,\pi,0)$, $(\pi,0,\pi)$ and $(0,\pi,\pi)$ where the three circles touch.
One can show that $T_0=T^3\setminus\Tdeg$ contracts onto a 1--dimensional CW--complex and hence
has $H^2(T_0)=0$. Thus there are no non--vanishing topological charges associated
to this geometry and no stability. Furthermore there is no slicing as any slice will hit $\Tdeg$.
Choosing a tubular neighbourhood of the smooth part of $\Tdeg$, we can define a function of Berry phases.
For this, one fixes a point in the smooth part of $\Tdeg$ and then chooses normal directions in the induced orientation. Then the family restricted to the two normal directions will be a restriction of the family \S\ref{2x2case} and like in the honeycomb case, computing the determinant of the matrix will yield the value of the Berry phase. By TRS symmetry for each point $k$ in the smooth part of $\Tdeg$ there is the opposite point $-k$ in the smooth part of $\Tdeg$ with opposite chirality.

The singular points of $\Tdeg$ are more complicated and will be the subject of further study.

\subsection{The Gyroid (G) }

For the gyroid the degenerate locus $\Tdeg$ is of real codimension 3 and consists of 4 points,
$(0,0,0)$, $(\pi,\pi,\pi)$, $(\frac{\pi}{2},\frac{\pi}{2},\frac{\pi}{2})$ and $(\frac{3\pi}{2},\frac{3\pi}{2},\frac{3\pi}{2})$.
The first two singular points correspond to  $(A_0,A_2)$ and $(A_2,A_0)$ singularities and the second two correspond to  an $(A_1,A_1)$ singularity, as calculated analytically in \cite{kkwk3}.
The latter furnish double Weyl points, i.e.\ two two--band crossings,
while the former yield three--band crossings.

Now $\Tdeg$ is the set of the four points above and $T_0=T^3\setminus \Tdeg$ contracts onto a 2--dim CW complex with non--trivial second homology \cite{momentum} and Theorem \ref{slicethm} applies.
All the charges are topologically stable.

The relevant numerics to compute the functions $\chi_i$ were carried out in \cite{kkwk5}.
As expected the $A_1$ singularities yield jumps by $\pm1$, and
the $A_2$ points yield jumps by $-2,0,2$ for the three bands that cross. This lead to the conjecture that
the latter points are also of spin type, which we now verify.

Namely, we add the local model description for all of these points and then show that one can use global constraints to completely describe the Eigenbundle geometry. A discussion of the behaviour of  the $A_2$ points under perturbations preserving some of the symmetry is given below.
The {\it prima vista} astonishing fact is that each of them splits into four $A_1$ points in compliance with the jumps given above.

\subsubsection{Extra Symmetry}
The Gyroid exhibits an extra symmetry given by $H(\bk+(\pi,\pi,\pi))=U^\dagger (-H(\bk))U$ with $U=diag(-1,1,1,1)$.
This means that the spectrum or Eigenvalue cover is invariant under simultaneously translating by $(\pi,\pi,\pi)$ and  flipping the sign of all Eigenvalues.
We see that if there is degeneracy at $\bk$ there is the same type of
degeneracy at $\bk+(\pi,\pi,\pi)$. Indeed this is true for the degeneracies listed above.
Moreover by Corollary \ref{negcor}, if the degeneracy is of spin type $(s_1,\dots,s_l)$ at $\bk$,  it is of spin type $(s_l,\dots, s_1)$ at $\bk+(\pi,\pi,\pi)$ with opposite chirality.
This adds information on the chirality of the double crossings.
Also, if (as we show) the $(A_2,A_0)$ singularity at zero is of spin type $(0,1)$ then necessarily we have that the $(A_0,A_2)$ singularity at $(\pi,\pi,\pi)$ is of spin type $(1,0)$ with opposite chirality.

\subsubsection{Local models}
For the two points $(A_1,A_1)$, we know that the local models are given by the usual double crossing ${\bf x}\cdot {\bsigma}$
for spin $\frac{1}{2}$. That is they are of spin type $(\frac{1}{2},\frac{1}{2})$, but the chirality remains to be determined.
For the $A_2$ singularity there could be a choice of local models.
Using perturbation theory, we computed  the local models.
The result is:
\begin{prop}
\label{localstrucprop}
The local models for the Gyroid are as follows.
\begin{enumerate}
\item The point $(0,0,0)$ is of spin type $(1,0)$ with the chirality $1$.
\item The point $(\pi,\pi,\pi)$ is of spin type $(0,1)$ with chirality $-1$.
\item The point $(\frac{\pi}{2},\frac{\pi}{2},\frac{\pi}{2})$ is of spin type $(\frac{1}{2},\frac{1}{2})$ with chirality $(-1,1)$.
\item The point $(\frac{3\pi}{2},\frac{3\pi}{2},\frac{3\pi}{2})$ is of spin type $(\frac{1}{2},\frac{1}{2})$ with chirality $(-1,1)$.
\end{enumerate}
\end{prop}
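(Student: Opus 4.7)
The plan is to combine direct first--order perturbation theory at two of the four points with the extra Gyroid symmetry $H(\bk+(\pi,\pi,\pi))=U^\dagger(-H(\bk))U$ and Corollary \ref{negcor} to reduce the calculation at the other two. Concretely, I would compute explicitly only at $(0,0,0)$ and at $(\pi/2,\pi/2,\pi/2)$, and then transport to $(\pi,\pi,\pi)$ and $(3\pi/2,3\pi/2,3\pi/2)$ by the symmetry.

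At the triple point $(0,0,0)$, note that $H(0,0,0)$ is the adjacency matrix of $K_4$, with spectrum $\{-1,-1,-1,3\}$; this confirms the $(A_2,A_0)$ type with the triple block at the lower eigenvalue. Fix an orthonormal basis of the $(-1)$--eigenspace (the orthogonal complement of $(1,1,1,1)^T/2$) and assemble the rank--$3$ projector $P_1$. Next, compute the three derivative matrices $\partial_{k_\ell}H|_{(0,0,0)}$, each of which has only two nonzero off--diagonal entries $\pm i$, and form the $3\times 3$ blocks $A_\ell:=P_1(\partial_{k_\ell}H)P_1$ in the chosen basis. Subtract the scalar traces, yielding three traceless Hermitian $3\times 3$ matrices, and realise them as $\sum_m L_{\phi_1,\ell m}S_m$ with $S_m$ the spin--$1$ generators in the Zeeman basis (up to a unitary adjustment of the eigenframe). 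The chirality is then read off as $\text{sign}\det L_{\phi_1}$.

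At the double--crossing $(\pi/2,\pi/2,\pi/2)$, diagonalise $H$ there to find the two doubly degenerate levels, fix orthonormal bases of the two $2$--dimensional eigenspaces, and assemble rank--$2$ projectors $P_1,P_2$. By Corollary \ref{doublecor} each block is automatically of spin--$\tfrac12$ type, so it only remains to express each traceless part of $P_j(\partial_{k_\ell}H)P_j$ as $\sum_m L_{\phi_j,\ell m}\sigma_m$ and compute the two signs $\text{sign}\det L_{\phi_j}$, $j=1,2$, obtaining the chirality pair $(-1,+1)$. Finally, the symmetry $H(\bk+(\pi,\pi,\pi))=U^\dagger(-H(\bk))U$ combined with Corollary \ref{negcor} transports spin type $(1,0)$ with chirality $+1$ at $(0,0,0)$ to spin type $(0,1)$ with chirality $-1$ at $(\pi,\pi,\pi)$, and transports $(\tfrac12,\tfrac12)$ with chirality $(c_1,c_2)=(-1,+1)$ at $(\pi/2,\pi/2,\pi/2)$ to $(\tfrac12,\tfrac12)$ with chirality $(-c_2,-c_1)=(-1,+1)$ at $(3\pi/2,3\pi/2,3\pi/2)$, matching the claim.

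The main obstacle is the $(A_2)$ analysis at $(0,0,0)$: one must verify that the traceless parts of $\{A_\ell\}$ really generate the principal (spin--$1$) embedding of $\mathfrak{su}(2)$ in $\mathfrak{su}(3)$, and not some other three--dimensional subspace such as the reducible $\mathfrak{su}(2)\oplus\{0\}$ that would come from a spin--$\tfrac12$--plus--trivial decomposition of the $(-1)$--eigenspace. Concretely this amounts to checking the $\mathfrak{su}(2)$ commutation relations of the normalised traceless parts; a useful reduction is to exploit the $C_3$ rotational symmetry $P\,H(k_1,k_2,k_3)\,P^T=H(k_3,k_1,k_2)$ coming from cycling the three non--root vertices, which fixes $(0,0,0)$ and decomposes the $(-1)$--eigenspace into a trivial plus a faithful two--dimensional summand. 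Adapting the eigenbasis to this $C_3$ action puts the $A_\ell$ into a form where the identification with $(S_x,S_y,S_z)$ is forced by equivariance and the chirality reduces to a single explicit determinant.
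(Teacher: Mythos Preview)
Your plan is exactly the paper's: it computes explicitly at $(0,0,0)$ and $(\pi/2,\pi/2,\pi/2)$ in the Appendix via first--order perturbation theory and then invokes the extra symmetry $H(\bk+(\pi,\pi,\pi))=U^\dagger(-H(\bk))U$ together with Corollary~\ref{negcor} for the remaining two points. The obstacle you anticipate at $(0,0,0)$ dissolves in practice, because the projected first--order block $PU^\dagger H_1(\bx)UP$ is purely imaginary and antisymmetric and hence automatically a real linear combination of the $so(3)$ generators $iL_x,iL_y,iL_z$; the spin--$1$ identification and the chirality then reduce to a single $3\times 3$ determinant, with no need for the $C_3$ argument.
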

\begin{proof}
The computation for the point $(0,0,0)$ is done in detail in the Appendix. The extra symmetry then implies the result for the point $(\pi,\pi,\pi)$.
The computation for the chirality of the point $(\frac{\pi}{2},\frac{\pi}{2},\frac{\pi}{2})$  is also in the Appendix.
It fixes the chirality of
 $(\frac{3\pi}{2},\frac{3\pi}{2},\frac{3\pi}{2})$. Alternatively the chiralities of the double crossings follow from the global analysis below.
\end{proof}

A schematic version is given in Figure \ref{schematicfig}.
From this one can read off the entire functions $\chi_i$ using the arguments of Lemma \ref{calclem}. The results are in Figure \ref{chivals}.
As a preview, we discuss the highest level $\chi_4$. Since there is no jump at $0$, we have that $\chi_4=0$ in the invervals adjacent to zero.
At $\pi/2$, $\chi_4$  jumps up by one, as the chirality of the top Weyl point is positive. At $\pi$ it jumps down by two, since the spin $1$ chirality is $-1$ and the top band then jumps by $-2$. At $3\pi/2$, $\chi_4$ jumps up by one again, to yield a net jump of $0$.

\begin{figure}
\includegraphics[width=.8\textwidth]{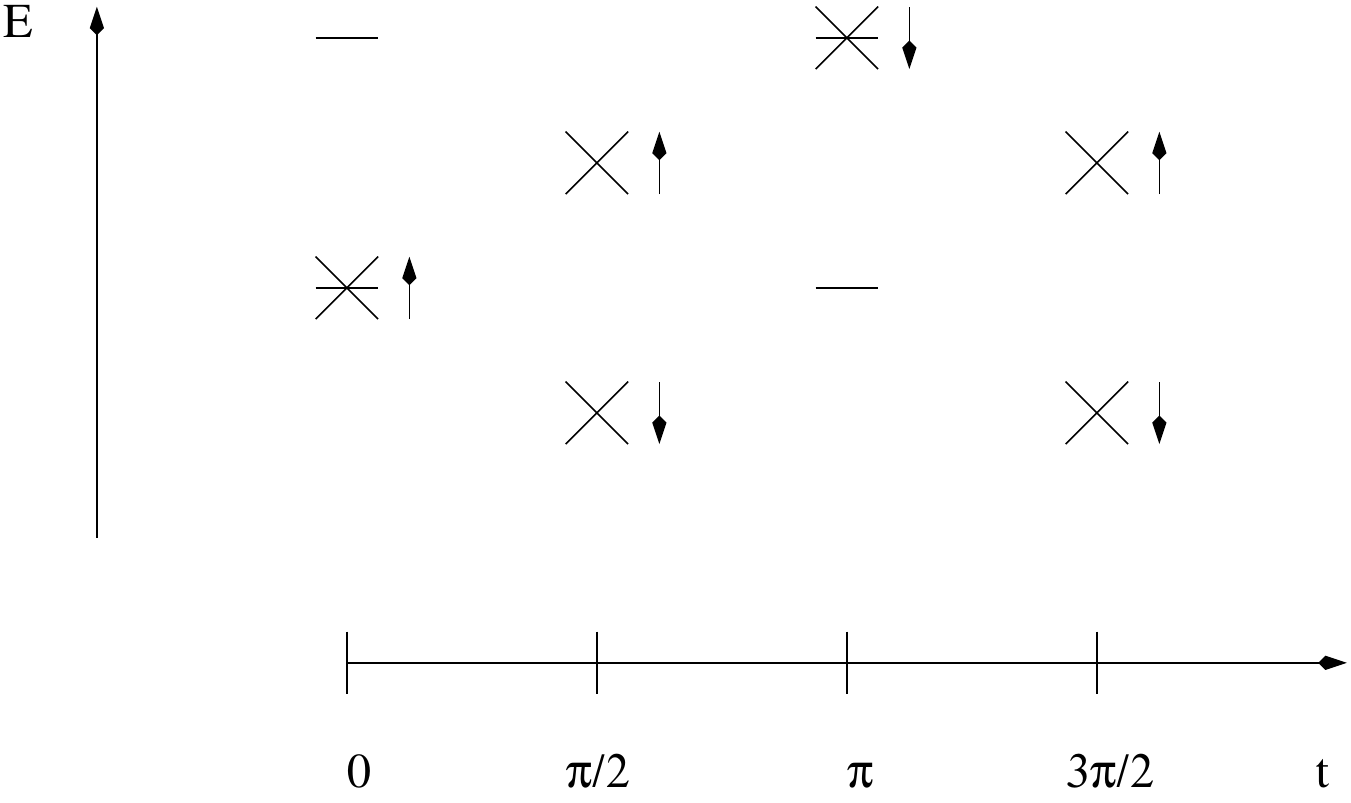}
\caption{\label{schematicfig} Schematic of the singularities for the $z$ slicing. Single lines are of type $A_0$, i.e.\ no crossing. Crosses indicate $A_1$ Weyl points. These are spin $1/2$. The $A_2$ triple crossings are of spin 1 type. The chiralities are indicated by arrows. $\uparrow$ means $+1$ and $\downarrow$ means $-1$ chirality. The axes are the slicing parameter $t$ and the energy $E$. The latter is only schematic, to indicate the relative positions of the level.}
\end{figure}
\subsubsection{Global analysis}
Since the Gyroid has time reversal symmetry, only one of the chiralities needs to be computed.
In fact, knowing the degeneracies are of spin type and their location, the functions $\chi_i$ are determined up to an overall change of sign. This is fixed by one chirality. The fact that the triple crossings are indeed of spin type is a separate proof,  however.

Let $t_c^0=0,t_c^1=\pi/2, t_c^2=\pi, t_c^3=3\pi/2$ be the critical slice parameters.
Pick intermediate parameters $0=t_c^0<t_1<t_c^1< \dots <t_4<2\pi$. We may choose  $t_4=-t_1,t_3=-t_2$.

To illustrate the power of Theorem \ref{constrainthm}, we give the details.
\begin{lem}
\label{calclem}
Due to TRS and the extra symmetry,
the chiralities of the $(\frac{1}{2},\frac{1}{2})$ spin type points are fixed by the chirality of one of the double crossings. Given this chirality, $\chi_1$ and $\chi_4$ and $\chi_2,\chi_3$ are fixed up to a parameter.

Adding that one of the triple crossings is of spin type, the other is as well. The spin type has to be spin $1$ and all the functions $\chi_i$ and chiralities are fixed by fixing one of the chiralities of either one of the spin--1 triple crossings or one of the spin--$\frac{1}{2}$ double crossings.
\end{lem}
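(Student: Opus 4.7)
The plan is to combine TRS, the gyroid's extra symmetry $H(\bk+(\pi,\pi,\pi))=U^{\dagger}(-H(\bk))U$, the already established singularity types at the four critical points, and Theorem \ref{constrainthm}, so as to pin every $\chi_i$ and every chirality down to a single integer parameter. First I would analyse the two Weyl points. Let $(a,b)$ denote the chirality of the $(\tfrac12,\tfrac12)$ spin--type local model at $\bk_1:=(\pi/2,\pi/2,\pi/2)$. The other Weyl point $\bk_2:=(3\pi/2,3\pi/2,3\pi/2)$ equals both $-\bk_1\pmod{2\pi}$ and $\bk_1+(\pi,\pi,\pi)$, so by part (4) of the Corollary to Theorem \ref{constrainthm}, TRS gives chirality $(a,b)$ at $\bk_2$, while Corollary \ref{negcor} together with the extra symmetry (which reverses the spin--type tuple and flips chirality, since $H\mapsto -H$ does so) gives $(-b,-a)$. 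Comparing forces $b=-a$, so the two Weyl chiralities are governed by the single integer $a$.

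Next I would run the slicing analysis for the $z$-projection, whose critical slices sit at $t=0,\pi/2,\pi,3\pi/2$. Let $\chi_i^{(0)},\chi_i^{(1)},\chi_i^{(2)},\chi_i^{(3)}$ denote the constant values of $\chi_i$ on the intervals $(0,\pi/2),(\pi/2,\pi),(\pi,3\pi/2),(3\pi/2,2\pi)$. Part (6) of Theorem \ref{constrainthm} gives $\chi_i^{(3)}=-\chi_i^{(0)}$ and $\chi_i^{(2)}=-\chi_i^{(1)}$, and part (7) gives $\chi_i^{(0)}=j_i(0)/2$, $\chi_i^{(1)}=-j_i(\pi)/2$. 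Independently of any spin--type assumption, the singularity types $(A_2,A_0)$ at $(0,0,0)$ and $(A_0,A_2)$ at $(\pi,\pi,\pi)$ force $j_4(0)=0$ and $j_1(\pi)=0$, so $\chi_4^{(0)}=0$ and $\chi_1^{(1)}=0$. The Weyl jumps at $\pi/2$ are $(j_1,j_2,j_3,j_4)=(-a,a,a,-a)$, which combined with $\chi_1^{(1)}=\chi_1^{(0)}-a=0$ force $\chi_1^{(0)}=a$. Triviality of $\bigoplus_i \L_i$ (part (2)) yields $\chi_2^{(0)}+\chi_3^{(0)}=-a$, leaving exactly one free parameter, and the values on the remaining three intervals then follow by adding the jumps and applying the oddness relations. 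So $\chi_1$ and $\chi_4$ are completely fixed by $a$ while $\chi_2,\chi_3$ are fixed only up to a single parameter, which is the first half of the lemma.

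Finally I would add the assumption that one triple crossing, say $(0,0,0)$, is of spin type. Since $A_2$ is three--fold, the spin representation must be three--dimensional, forcing spin $1$ and hence spin type $(1,0)$; Corollary \ref{negcor} applied via the extra symmetry then makes $(\pi,\pi,\pi)$ of spin type $(0,1)$, so it is also of spin type. Writing $c_0$ for the chirality at the origin, \S\ref{multcrosspar} gives $j_i(0)=(-2c_0,0,2c_0,0)$, so $\chi_2^{(0)}=0$ and $\chi_3^{(0)}=c_0$, eliminating the remaining free parameter, and the identity $\chi_1^{(0)}=a=-c_0$ ties the Weyl chirality to the triple--crossing chirality. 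Hence fixing any single chirality among the four critical points determines every $\chi_i$ and every other chirality. The main obstacle I anticipate is purely in the bookkeeping: keeping signs consistent when chaining the antiunitary TRS (no Hamiltonian sign flip, $\bk\mapsto -\bk$) with the unitary--plus--negation extra symmetry and translating the resulting chirality identities through the orientation conventions of \S\ref{multcrosspar} and Corollary \ref{negcor}; everything else is a routine sum of jumps along $S^1$.
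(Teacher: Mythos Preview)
Your argument is correct and follows essentially the same route as the paper's proof: both combine the oddness of $\chi_i$ from TRS (part (6)--(7) of Theorem~\ref{constrainthm}) with the vanishing of $j_4(0)$ and $j_1(\pi)$ to pin down $\chi_1,\chi_4$, then use $\sum_i\chi_i=0$ to leave one free parameter in $\chi_2,\chi_3$, which the spin--type assumption at a triple point eliminates. The only cosmetic difference is that you first derive the internal constraint $b=-a$ on the Weyl chiralities by comparing the TRS and extra--symmetry images of $\bk_1$ directly, whereas the paper obtains the equivalent relation by chaining the induced identities on the jumps $j_i$; the content is the same.
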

\begin{proof}

The family is time reversal invariant, so that $\chi_i(t_1)=-\chi_i(t_4)$
and $\chi_i(t_2)=-\chi_i(t_3)$.
Thus it suffices to know the $\chi_i(t_k)$ for $i=1,2,4;k=1,2$ to know
the whole step functions $\chi_i$. The function $\chi_3$ can  be computed by Theorem \ref{constrainthm} (2).

Assume that $j_1(\pi/2)=1$ then by TRS $j_1(3\pi/2)=1$ and by the extra symmetry $j_4(3\pi/2)=1$, which in turn means by TRS means that $j_4(\pi/2)=1$. We could have equally started with any one of these four chiralities. This fixes the following data.
\begin{eqnarray*}
\chi_1(t_2)-\chi_1(t_1)=j_1(\pi/2)=+1&&\chi_1(t_4)-\chi_1(t_3)=j_1(3\pi/2)=+1\\
\chi_2(t_2)-\chi_2(t_1)=j_2(\pi/2)=-1&&\chi_2(t_4)-\chi_2(t_3)=j_2(3\pi/2)=-1\\
\chi_3(t_2)-\chi_3(t_1)=j_3(\pi/2)=-1&&\chi_3(t_4)-\chi_3(t_3)=j_3(3\pi/2)=-1\\
\chi_4(t_2)-\chi_4(t_1)=j_4(\pi/2)=+1&&\chi_4(t_4)-\chi_4(t_3)=j_4(3\pi/2)=+1
\end{eqnarray*}

We also know that $j_4(0)=j_1(\pi)=0$, since the respective Eigenvalues are not degenerate at these points and hence by Theorem \ref{constrainthm} Part II, it follows that $\chi_4(t_1)=\chi_4(t_4)=\chi_1(t_2)=\chi_1(t_3)=0$.
Thus we know the full functions $\chi_1,\chi_4$. We also know that $j_1(0)=-2$ and $j_4(\pi)=-2$. Hence if either of the  $A_2$ singularity at $0$ is of spin type, it is of spin type 1 and the other has to be of spin type 1 as well due to the extra symmetry. The chirality is also fixed to be $+1$ at $(0,0,0)$ and $-1$ at $(\pi,\pi,\pi)$.
Furthermore assuming spin type, we see that $j_2(0)=j_3(0)=0$ and again the full functions are fixed.
The extra condition of $j_3(0)=j_2(\pi)=2$ is then automatically satisfied. Knowing the chirality of one of the $A_2$ singularities fixes the jumps at $0$ and $\pi$ and hence the chirality of the double crossings via Theorem \ref{constrainthm} (2) and TRS symmetry.

\begin{figure}
\raisebox{.5in}{\begin{tabular}[b]{l|rrrr}
&$t_1$&$t_2$&$t_3$&$t_4$\\
\hline
$\chi_1$&-1&0&0&1\\
$\chi_2$&0&-1&1&0\\
$\chi_3$&1&0&0&-1\\
$\chi_4$&0&1&-1&0\\
\end{tabular}}
\quad
\includegraphics[width=.6\textwidth]{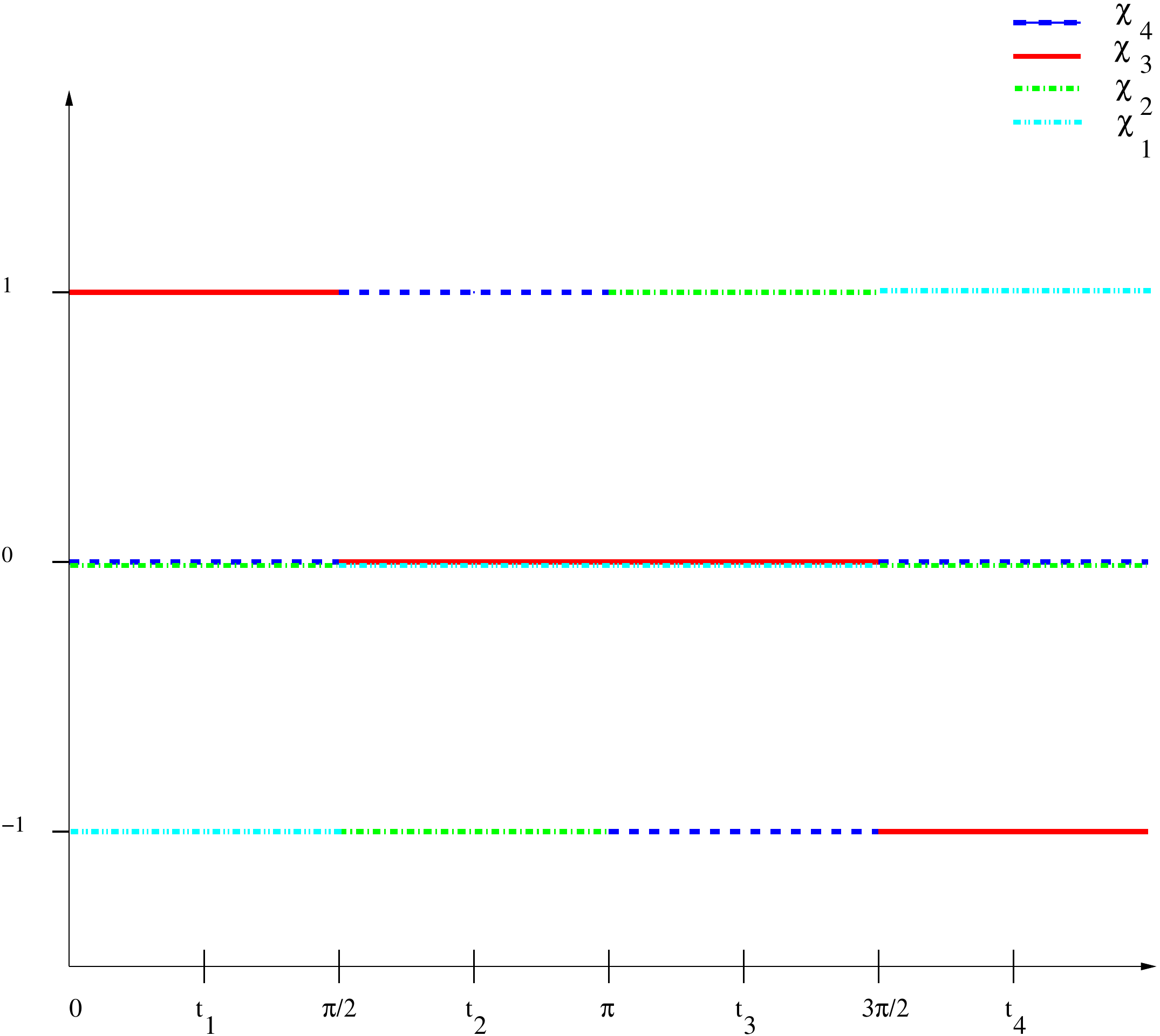}
\caption{\label{chivals}Values of the functions $\chi_i$ and graphs}
\end{figure}

On the other hand, if we do not assume that one of the $A_2$ singularities is of spin type, we can still use Theorem \ref{constrainthm} (3) and (4) to obtain the equations:
$j_2(0)+j_2(\pi)=2,j_3(0)+j_3(\pi)=2, j_2(0)+j_3(0)=2,j_2(\pi)+j_3(\pi)=0$.
We can then further reduce to one parameter, say $j_3(0)=m\in 2 \Z$, then $j_3(\pi)=j_2(0)=2-m$ and $j_2(\pi)=m$.
These automatically satisfy Theorem \ref{constrainthm} (2).

Changing the chirality flips all signs in the argument.

\end{proof}

\begin{prop}
The functions $\chi_i$ for the Gyroid and the slicing
$\phi_t:(\theta_1,\theta_2)=(\theta_1,\theta_2,t)$ are given by the table in Figure \ref{chivals} . 
\end{prop}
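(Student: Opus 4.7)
The plan is to combine Proposition \ref{localstrucprop}, which pins down the spin types and the chiralities at each of the four degenerate points, with Lemma \ref{calclem}, which shows that once one chirality is fixed the full collection of step functions $\chi_i$ is determined by TRS, the extra $(\pi,\pi,\pi)$-symmetry, and Theorem \ref{constrainthm}. So the proposition is essentially a bookkeeping consequence of results already in place, and the role of the proof is to read off the jump data and propagate it consistently through one fundamental domain.

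Concretely, I would first translate Proposition \ref{localstrucprop} into jump data at $t_c^0=0$, $t_c^1=\pi/2$, $t_c^2=\pi$, $t_c^3=3\pi/2$. The spin-$1$ cluster at $(0,0,0)$ with chirality $+1$ gives $(j_1,j_2,j_3,j_4)(0)=(-2,0,+2,0)$; the spin-$1$ cluster at $(\pi,\pi,\pi)$ with chirality $-1$ (and acting on the top three bands) gives $(j_1,j_2,j_3,j_4)(\pi)=(0,+2,0,-2)$; and the two $(\tfrac{1}{2},\tfrac{1}{2})$ points with chiralities $(-1,+1)$ each give $(j_1,j_2,j_3,j_4)(\pi/2)=(j_1,j_2,j_3,j_4)(3\pi/2)=(+1,-1,-1,+1)$, using the formula $j_m=\mathrm{sign}(\det L_\phi)\cdot 2m$ from the corollary after the spin-type definition.

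Next I would evaluate $\chi_i$ at an intermediate parameter $t_1\in(0,\pi/2)$. By Theorem \ref{constrainthm} Part 2 item (7), the jumps at $t=0$ (and at $t=\pi$) are distributed symmetrically, so $\chi_i(t_1)=\tfrac{1}{2}j_i(0)$, giving $(\chi_1,\chi_2,\chi_3,\chi_4)(t_1)=(-1,0,+1,0)$. Adding the jumps at $\pi/2$ yields $(\chi_1,\chi_2,\chi_3,\chi_4)(t_2)=(0,-1,0,+1)$ for any $t_2\in(\pi/2,\pi)$. The values on $(\pi,3\pi/2)$ and $(3\pi/2,2\pi)$ then follow immediately from TRS via $\chi_i(t_3)=-\chi_i(t_2)$ and $\chi_i(t_4)=-\chi_i(t_1)$, reproducing the remaining two columns of the table.

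Finally I would verify consistency: the sums $\sum_i\chi_i$ vanish on every interval, the symmetric-jump condition at $t=\pi$ is satisfied (since $\chi_i(t_2)=\tfrac{1}{2}j_i(\pi)$ in each band), and adding the jumps at $3\pi/2$ returns $\chi_i$ to its value at $t_1+2\pi$, closing the cycle. The only genuinely non-trivial input is the local spin-type determination in Proposition \ref{localstrucprop}; everything else is algebra on a finite set of integers, so I do not expect any obstacle beyond carefully keeping track of which band belongs to which spin cluster at each of the four critical $t$-values.
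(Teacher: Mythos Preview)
Your proposal is correct and follows essentially the same route as the paper: the paper's proof is a one-line appeal to Lemma~\ref{calclem} (a single chirality determines all the $\chi_i$) together with Proposition~\ref{localstrucprop} (which supplies that chirality), and you simply carry out the resulting jump-propagation explicitly. Your computed values agree with the table in Figure~\ref{chivals}, and the only substantive input---the spin-type and chirality data from Proposition~\ref{localstrucprop}---is exactly what the paper invokes.
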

\begin{proof}
By the Lemma all we need to know if one of the chiralities of Proposition \ref{localstrucprop}.
\end{proof}

Also note that the singularity in the fiber over $(\pi/2,\pi/2,\pi/2)$ gives rise to another singularity in the fiber at $(-\pi/2,-\pi/2,-\pi/2)$ by both TRS and the extra symmetry.
This forces  another singularity somewhere else as the following computation shows.

\subsection{Deformation under symmetry}
If we deform the Hamiltonian in the system above to resolve the triple crossing into normal double crossing singularities, but keep the time reversal symmetry,
we know:
\begin{enumerate}
\item There will be no singularities at $t=0,\pi$ as these would have to be at least triple crossings.
\item Isolated double crossings will appear pairwise. For every double crossing at $\pi-t$ that appears in a small neighborhood   of $\pi$ there will be a corresponding double crossing at $\pi+t$ with opposite jumps.
\item If all the double crossings are between $t_2^*$ and $t_3^*= -t_2^*$, then the total jumps  between $t^*_2$ and $t_3^*$ are by $2,0,-2$.
\item For $\chi_i$ to jump by two, the corresponding Eigenvalue will have to cross two times with the same sign.
\item If $\chi_i$ jumps by $1$ at some point, by time reversal symmetry, it jumps a second time by $1$.

\end{enumerate}
Looking at these constraints, we see that the minimal resolution will have to have 4 double crossings and this is borne out by the numerics \cite{kkwk5}. More precisely, if band $4$ crosses with $3$ and has a jump of $-1$ at $t_2^*<\pi-s_1<\pi$ then due to time reversal symmetry they will cross again at $\pi+s_1$ with another jump of $-1$ and thus have a net jump of $-2$ for the band $4$ from $t_2^*$ to $t_3^*$ as needed. Likewise, if the bands $2$ and $3$ cross at $t_2^*<\pi-s_2<\pi$ with a jump of band $2$ by $1$ in the Chern number, then there will be a second crossing with the same jump at $\pi+s_2$ and these two will add up to a net jump by $2$ for band $2$.
There is no further crossing needed as the band $3$ will have a total jump of $2-2=0$.

What is not determined is if $s_1 \geq s_2$  or $s_1\leq s_2$. In fact the order of $s_1$ and $s_2$ may well be different for different deformations; equality is not generic.

\section*{Acknowledgments}
RK thankfully acknowledges
support from NSF DMS-0805881 and DMS-1007846.
BK  thankfully acknowledges support from the  NSF under the grant  PHY-1255409.
 Any opinions, findings and conclusions or
recommendations expressed in this
 material are those of the authors and do not necessarily
reflect the views of the National Science Foundation.
This work was partially supported by grants from the Simons Foundation (\#267481 to Erika Birgit Kaufmann and \#267555 to Ralph Kaufmann).
Both RK and BK thank the Simons Foundation for this support.
They also thank M.\ Marcolli for insightful discussions.

Parts of this work were completed when RK was visiting
the IHES in Bures--sur--Yvette and the University of Hamburg with a Humboldt fellowship and RK and BK were visiting the Institute for Advanced Study in Princeton and the Max--Planck--Institute in Bonn. They gratefully acknowledge
all this support.

\appendix

\section{Calculations}
In this appendix, we give some of the calculations.
\subsection{Honeycomb/graphene}
We expand $$
\begin{pmatrix}
0&1+e^{ik_1}+e^{ik_2}\\
1+e^{-ik_1}+e^{-ik_2}&0
\end{pmatrix}$$
at $\bk_0=(\frac{2\pi}{3},-\frac{2\pi}{3})$
and obtain
$$H(\bk_0+\bx)=
\begin{pmatrix}
0&0\\0&0
\end{pmatrix}
+\begin{pmatrix}
0&-\frac{\sqrt{3}(x-y)-i(x+y)}{2}\\
-\frac{\sqrt{3}(x-y)+i(x+y)}{2}&0\\
\end{pmatrix}  +O(\bx^2)
$$
 Comparing with \eqref{Weylfameq} and keeping in mind that $\bS=\frac{1}{2}\bsigma$,
we can read off the transformation $a=-\sqrt{3}(x-y), b=x+y$ which has negative determinant and chirality.
This is indeed equatorial Dirac, since the diagonal entries are $0$.

Expanding at  $\bk_0=(-\frac{2\pi}{3},\frac{2\pi}{3})$ yields
$$H(\bk_0+\bx)=
\begin{pmatrix}
0&0\\0&0
\end{pmatrix}
+\begin{pmatrix}
0&\frac{\sqrt{3}(x-y)-i(x+y)}{2}\\
\frac{\sqrt{3}(x-y)+i(x+y)}{2}&0\\
\end{pmatrix}  +O(\bx^2)
$$
and the transformation $a=\sqrt{3}(x-y), b=x+y$ which has positive determinant and chirality.

\subsection{Gyroid}

\subsubsection{Triple crossings}
We compute that the $A_2$ singularity at $(0,0,0)$ for the Gyroid is
of spin 1 type and has positive chirality.

To compute the local model we used first order perturbation theory and expanded $H$ near
$\bk_0=(0,0,0)$ as $H(\bk_0+\bx)=H(\bk_0)+H_1(\bf{x})+{\mathcal O}({\bf x}^2)$. This yields
$$
H(0,0,0)=\begin{pmatrix}
0&1&1&1\\
1&0&1&1\\
1&1&0&1\\
1&1&1&0
\end{pmatrix}\quad
H_1(\bx)=
\begin{pmatrix}
0&0&0&0\\
0&0&ix&-iy\\
0&-ix&0&iz\\
0&iy&-iz&0\\
\end{pmatrix}$$
Since, we is a triple degeneracy for the Eigenvalue $-1$, we have to transform to a unitary basis to do the projection.
The transformation matrix to diagonal form $diag(3,-1,-1,-1)$ of $H(0,0,0)$ is
$$U=\begin{pmatrix}
\frac{1}{2}&\frac{1}{\sqrt{2}}&0&\frac{1}{2}\\
\frac{1}{2}&-\frac{1}{\sqrt{2}}&0&\frac{1}{2}\\
\frac{1}{2}&0&\frac{1}{\sqrt{2}}&-\frac{1}{2}\\
\frac{1}{2}&0&-\frac{1}{\sqrt{2}}&-\frac{1}{2}\\
\end{pmatrix}$$
We then have the projection to the ${-1}$ Eigenspace $PU^{\dagger}H_1(\bx)UP$ where $P=diag(0,1,1,1)$.
The resulting block of the matrix acting  in the subspace with eigenvalue $-1$  is
$$
\begin{pmatrix}
0&-\frac{i}{2}(x+y)&\frac{i}{2\sqrt{2}}(x-y)\\
\frac{i}{2}(x+y)&0&-\frac{i}{2\sqrt{2}}(x+y+2x)\\
-\frac{i}{2 \sqrt{2}}(x-y)&\frac{i}{2\sqrt{2}}(x+y+2x)&0\\
\end{pmatrix}
$$

Setting $a=\frac{1}{2\sqrt{2}}(x+y+2z)=\frac{1}{2\sqrt{2}}(x-y),c=\frac{1}{2}(x+y)$ the matrix takes the form
$$
i\begin{pmatrix}
0&-a&b\\
a&0&-c\\
-b&c&0
\end{pmatrix}=aiL_x+biL_y+ciL_z=(a,b,c)\cdot \tilde\bS$$
where $L_x,l_y,L_z$ are the standard generators for $so(3)$ and $\tilde \bS$ is the corresponding spin representation. This is not in the standard form, but for the chirality, we only need to determine the sign of the transformation $T:(x,y,z)\to (a,b,c)$
$$
sign(det(T))=sign(\begin{vmatrix} \frac{1}{2\sqrt{2}}&\frac{1}{2\sqrt{2}}&\frac{1}{2}\\
\frac{1}{2\sqrt{2}}&-\frac{1}{2\sqrt{2}}&0\\
\frac{1}{2}&\frac{1}{2}&0\end{vmatrix})=sign(\frac{1}{4})=+1
$$

\subsubsection{Weyl points}
At the point $(\pi/2,\pi/2,\pi/2)$ the matrices are
$$
H(\frac{\pi}{2},\frac{\pi}{2},\frac{\pi}{2})=
\begin{pmatrix}
0&1&1&1\\
1&0&i&-i\\
1&-i&0&i\\
1&i&-i&0
\end{pmatrix}\quad
H_1(\bx)= \begin{pmatrix}
0&0&0&0\\0&0&-x&-y\\0&x&0&-z\\0&y&z&0
\end{pmatrix}
$$
The transformation matrix is
$$U=\begin{pmatrix}
\frac{1}{6}(-3-i\sqrt{3})&-\frac{1}{\sqrt{6}}&\frac{1}{6}(3-i\sqrt{3})&\frac{1}{\sqrt{6}}\\
\frac{1}{6}(3i+\sqrt{3})&-\frac{i}{\sqrt{6}}&\frac{1}{6}(-3i+\sqrt{3})&\frac{i}{\sqrt{6}}\\
0&\frac{1}{\sqrt{2}}&0&\frac{1}{\sqrt{2}}\\
\frac{1}{\sqrt{3}}&\frac{i}{\sqrt{6}}&\frac{1}{\sqrt{3}}&-\frac{i}{\sqrt{6}}
\end{pmatrix}
$$
this yields the following $2\times 2$ matrices for the Eigenspaces $-\sqrt{3}$.
$$\begin{pmatrix}
-\frac{y}{3}&\frac{-\sqrt{3}x-\sqrt{3}y-2\sqrt{3}z+i(3x+y)}{6\sqrt{2}}\\
\frac{-\sqrt{3}x-\sqrt{3}y-2\sqrt{3}z-i(3x+y)}{6\sqrt{2}}
\end{pmatrix}$$

For the transformation $a=\frac{-x-y-2z}{\sqrt{6}},b=\frac{3x+y}{3\sqrt{2}},c=-\frac{2y}{3}$ this becomes
$\frac{1}{2}\begin{pmatrix}
c&a-ib\\
a+ib&c
\end{pmatrix}
=aS_x+bS_y+cS_z
$ which yields the chirality $-1$. For Eigenspace $\sqrt{3}$ the $2\times 2$ matrix is the complex conjugate of the matrix above and
the transformation is accordingly $a=\frac{-x-y-2z}{\sqrt{6}},b=-\frac{3x+y}{3\sqrt{2}},c=-\frac{2y}{3}$ which yields the opposite chirality $1$.

\end{document}